\documentclass[sigconf]{acmart}
\usepackage{booktabs} 
\usepackage{verbatim}
\usepackage{indentfirst}
\setlength{\parindent}{2em}

\usepackage{algorithm}
\usepackage{algorithmic}
\usepackage{subfigure}
\usepackage{graphicx}

\usepackage{makecell}

\usepackage{amsmath}
\settopmatter{printacmref=false}
\setcopyright{rightsretained}
\fancyhf{} 
\fancyhead[C]{Anonymous submission \#231 to ACM CCS 2019} 
\fancyfoot[C]{\thepage}

\setcopyright{none} 
\acmConference[Anonymous Submission to ACM CCS 2019]{ACM Conference on Computer and Communications Security}{Due 15 May 2019}{London, TBD}
\acmYear{2019}

\settopmatter{printacmref=false, printccs=true, printfolios=true} 


\begin{document}
\title{Differentially Private Aggregated  Mobility Data Publication Using Moving Characteristics} 

\author{Zhili Chen, Xiaoli Kan,Shun Zhang}
\affiliation{%
  \institution{Anhui University}
  \streetaddress{No. 111, Jiulong Road}
  \city{Hefei}
  \state{Anhui}
  \country{China}
  \postcode{230601}
}
\email{zlchen@ahu.edu.cn,xl_kan@163.com,shzhang27@163.com}

\author{Lin Chen}
\affiliation{%
  \institution{University of Paris-Sud}
  \streetaddress{Office: 242, batiment 650, LRI}
  \city{Paris}
  \country{France}
}
\email{chen@lri.fr}

\author{Yan Xu, Hong Zhong}
\affiliation{%
  \institution{Anhui University}
  \streetaddress{No. 111, Jiulong Road}
  \city{Hefei}
  \state{Anhui}
  \country{China}
  \postcode{230601}
}
\email{xuyan@ahu.edu.cn,zhongh@mail.ustc.edu.cn}

\begin{abstract}
With the rapid development of GPS enabled devices (smartphones) and location-based applications, location privacy is increasingly concerned. Intuitively, it is widely believed that location privacy can be preserved by publishing aggregated mobility data, such as the number of users in an area at some time. However, a recent attack shows that these aggregated mobility data can be exploited to recover individual trajectories. In this paper, we first propose two differentially private basic schemes for aggregated mobility data publication, namely direct perturbation and threshold perturbation, which preserve location privacy of users and especially resist the trajectory recovery attack. Then, we explore the moving characteristics of mobile users, and design an improved scheme named static hybrid perturbation by combining the two basic schemes according to the moving characteristics. Since static hybrid perturbation works only for static data, which are entirely available before publishing, we further adapt the static hybrid perturbation by combining it with linear regression, and yield another improved scheme named dynamic hybrid perturbation. The dynamic hybrid perturbation works also for dynamic data, which are generated on the fly during publication. Privacy analysis shows that the proposed schemes achieve differential privacy. Extensive experiments on both simulated and real datasets demonstrate that all proposed schemes resist the trajectory recovery attack well, and the improved schemes significantly outperform the basic schemes.
\end{abstract}

\begin{CCSXML}
<ccs2012>
<concept>
<concept_id>10002978.10003029.10011150</concept_id>
<concept_desc>Security and privacy~Privacy protections</concept_desc>
<concept_significance>500</concept_significance>
</concept>
</ccs2012>
\end{CCSXML}

\ccsdesc[500]{Security and privacy~Privacy protections}


\keywords{Differential privacy, Location privacy, Location-based Service, Aggregated mobility data, Trajectory} 

\maketitle
\section{Introduction}
With the development of wireless communication and mobile positioning technologies, a growing number of mobile devices have been equipped with GPS precise positioning functions. This change makes location-based service (LBS) become increasingly popular, which is one of the most promising services for mobile users. LBS refers to the services providing information and entertainment to mobile users based on geographic locations and other information in mobile devices \cite{Lee2011Protecting}. It has been widely used in military, government industry, commerce, medical, emergency, and people's livelihood \cite{Zhou2011Location}. Some typical applications include map applications and GPS navigation (e.g., Google Maps), coupons or discount offers (e.g., MeiTuan), social communications (e.g., WeChat). However, while bringing huge benefits to individuals and society, LBS has also aroused serious privacy concerns.

When users access to LBS, they need to report their locations, which can be used to infer sensitive personal information, such as home addresses, living habits, health status and social relations, etc. Therefore, the disclosure of location information to untrusted third parties (e.g., LBS providers) will possibly lead to the abuse of personal data, and produce serious risks to user privacy. For example, according to the anonymous GPS data, one can infer an individual's home address, work unit and social relationships, and can predict the user's past, present and future locations, track an individual's whereabouts. Even indoor location information can be used to infer an individual's job role, age, hobbies (smoking or not), etc. Therefore, the protection of users' location privacy is a crucial issue \cite{zhang2015privacy}.

Previously, many schemes have been proposed to protect user location privacy in LBS. For example, Niu et al \cite{Niu2014Achieving} applied k-anonymity to the data publication in LBS, such that personal privacy cannot be identified by attackers. Wang et al \cite{Wang2018A} replaced usernames in LBS queries with temporary pseudonyms to break the link between a user's identity and its queries. In order to hide users' real locations, Niu et al \cite{Niu2014Privacy} proposed to achieve user anonymity by generating dummy locations, and using virtual locations instead of real locations. In addition, some other works proposed to solve the privacy problem in the LBS through the trajectory generalization process \cite{Stefanakis2012Trajectory}, or based on encryption methods \cite{Schlegel2015User}.

Although people have made great efforts and contributions to the privacy protection for LBS users, new privacy problems are still emerging. The recent work \cite{Xu2017Trajectory} argued that user privacy is not preserved in aggregated mobility data. They designed an attack that recovers individual trajectories from the aggregated mobility data without any prior knowledge by exploiting the uniqueness and regularity of human mobility. The accuracy of the recovered trajectories is between 73\% and 91\%. This is a fatal blow to LBS providers (mobile operators) that have published aggregated data. They were intuitively convinced that sorely the aggregated data could not reveal any privacy information of individual users, but now it becomes a serious privacy problem.

Differential privacy \cite{Dwork2006Differential} has been widely accepted as a standard for privacy protection. This technique has been developed in a series of articles \cite{Dinur2003Revealing,Blum2005Practical,Dwork2004Privacy} and finally taken sharp in \cite{Dwork2006Calibrating}. This technology can achieve any level of privacy protection, and intuitively it captures the increased risks of privacy breaches. In many cases, it can ensure a high level of privacy protection, as well as the sufficient accuracy of the database information. Differential privacy has been widely used in data publishing and data mining \cite{Li2016Differential}. Especially in the privacy protection of aggregated statistics \cite{Kellaris2014Differentially,Li2015Differentially,Fan2012Real,Zhang2013Differentially,Qian2016RescueDP}, it elegantly bounds the knowledge gain of an adversary whether a user opts in or out of a dataset.

In this paper, we propose differentially private schemes for perturbing and publishing mobile aggregated data. By analyzing the distribution of users at different times in cities \cite{Wang2015Understanding,Isaacman2012Human}, we find the moving characteristics of urban users as follows. During the daytime, users usually appear in several locations and move frequently, resulting in large $L_{1}$ distances of aggregated mobility data at adjacent time-stamps. However, in the nighttime, users are often near homes, or even in static states, causing that the $L_{1}$ distances are relatively small. In our schemes, we divide a day into daytime and nighttime two periods accordingly with differential privacy. In the daytime period when the data change rapidly, we add noise directly to the aggregated mobility data with Laplace mechanism. In the nighttime period when the data change slowly, we compare the changes with a threshold and add Laplace noise to the data accordingly. Furthermore, after perturbing the aggregated mobility data, we design a post-processing mechanism to process the noise data, which greatly improves the data utility. The contribution of this paper can be summarized as follows:

$\bullet$ To our knowledge, we are the first to design a differentially private scheme for aggregated mobility data publication, and enhance the design by taking advantage of moving characteristics of mobile users. Previous work mainly applies differential privacy to common aggregated data publication without considering the application-specific characteristics of data.

$\bullet$ For static data publication, we design the static hybrid perturbation mechanism based on users' moving characteristics, adopting different perturbation methods in different time periods of a day. Moreover, we design a post-processing mechanism to smooth the noise data, improving the data utility.

$\bullet$ For dynamic data publication, we design the dynamic hybrid perturbation by combining the static hybrid perturbation with learning and prediction, such that the mechanism first learns from the historical data, and then predicts for the current data accordingly.

$\bullet$ We fully implement our scheme, and evaluate it based on both simulated and real datasets. Experimental results show that our scheme provides strong privacy protection against the trajectory recovery attack, and ensures the utility of data as well.

The rest of this paper is organized as follows. In Section~\ref{sec:related-work}, the related work is introduced. In Section~\ref{sec:preliminaries}, the preliminaries are given. Section~\ref{sec:system-model} is the system model. We elaborate on the basic schemes in Section~\ref{sec:basic} and the improved schemes in Section~\ref{sec:improved}. We include the detailed experimental evaluation of the proposed schemes in Section~\ref{sec:exeriments} and make conclusions in Section~\ref{sec:conclusions}.

\section{Related Work}\label{sec:related-work}
In this section, we review the related work of location privacy from the following aspects.

\textbf{Traditional Protection.} There exist various traditional techniques for protecting users' locations or trajectories. Niu et al. \cite{Niu2014Achieving} adopted the k-anonymity to protect users' location privacy by preventing identification. Papers \cite{Wang2018A} and \cite{Niu2014Privacy} ensured location privacy by pseudonym replacement and dummy location generation, respectively. Stefanakis \cite{Stefanakis2012Trajectory} protected users' movement trajectories by a trajectory generalization technique. Schlegel et al. \cite{Schlegel2015User} proposed dynamic grid system to preserve users' trajectory privacy. These methods for location privacy protection normally based an ad hoc basis, lacking of a rigorous mathematical foundation. It is indicated that an adversary can still obtain sensitive information of users through some background knowledge, auxiliary information, and the like \cite{Xu2017Trajectory}. Different from this, our work leverages the notion of differential privacy, which is of rigorous theoretical foundations, and can rule out background knowledge attacks.

\textbf{DP-based Protection.} Differential privacy (DP) has been widely applied to location privacy protection. Here, we only review the most recent works for data release, which are relevant to our work. Fan et al. \cite{Fan2012Real} proposed a real-time aggregated data release scheme at event-level, FAST, by comparing the error between a prior estimate and a posterior one of perturbed data, and then conducting an adaptive sampling and perturbation. However, FAST must pre-assign the maximum times of publications and the privacy budget of each sampling point. For infinite stream data publication, Kellaris et al. \cite{Kellaris2014Differentially} proposed a $w$-event privacy framework, protecting any event sequence occurring within any sliding window of $w$ time stamps. When $w=1$, this scheme degrades into an aggregated data publication at event-level like \cite{Fan2012Real}. Li et al. \cite{Li2015Differentially} proposed publication schemes for dynamic datasets with differential privacy. They calculated distances of data at adjacent time stamps, and determined whether to release a new perturbation or an old one by comparing with a threshold. These works have well addressed the location privacy protection for various data release issues. However, to enhance the data utility, they only considered the common features of data, while did not take into account the application-specific features of data.





\textbf{Location Characterization.} There are some works capturing the characteristics of people's locations. According to \cite{Wang2015Understanding,Isaacman2012Human}, the mobile patterns of human beings during every day are quite regular. Activities are most frequent during the daytime, followed by the evening, and lest frequent at midnight. Literatures \cite{Cao2017Spatio,Chen2018The} analyzed the temporal and spatial characteristics of urban population, showing the time durations of stays in the nighttime are significantly longer than those in the daytime.
These location characteristics have been exploited to launch a location privacy attack. For instance, Xu et al. \cite{Xu2017Trajectory} designed an attack method that accurately recovers the trajectories of users based on aggregated location data using the location characteristics of human beings.
This attack has also taught us a lesson that even aggregated data reveal users' privacy. We have used the location characteristics of users in our work, and different from literature \cite{Xu2017Trajectory}, we use them to improve the data utility while protecting the location privacy.




\section{Preliminaries}\label{sec:preliminaries}
In this section, we briefly review the notion of differential privacy and the technique of linear regression.

\subsection{Differential Privacy}

Differential privacy (DP) ensures a similar function output of a dataset no matter a user's personal information is in or out of the dataset. The formal definition can be described as follows.

\begin{definition}
 (DIFFERENTIAL PRIVACY)\cite{Dwork2006Differential} A randomized mechanism $\mathcal{M}$: $\mathcal{D} \rightarrow \mathcal{O}$ satisfies $\epsilon$-DP, where $\epsilon\geq 0$, if for all sets $\textsl{O}\subseteq \mathcal{O}$, and every pair $D,D' \in\mathcal{D}$ of neighboring databases, we have
 \begin{equation}\label{definition}
   Pr[\mathcal{M}(D)\in\textsl{O}]\leq e^{\epsilon}\cdot Pr[\mathcal{M}(D')\in\textsl{O}]
 \end{equation}

\end{definition}

Differential privacy can be achieved by adding an appropriate amount of interference noise to the return value of a query function. Adding too much noise will affect the utility of the results, while too small noise will not suffice the privacy guarantee. Sensitivity is the key parameter that determines what amount of noise should be added. It refers to the maximum amount of change caused by deleting a record in the worst case.

\begin{definition}[Global Sensitivity \cite{dwork2014algorithmic}]
Let $D$ and $D'$ denote any pair of neighboring databases. The global sensitivity of a function $f$, denoted by $\Delta f$, is given as below
\begin{equation}\label{definition}
  \Delta f=\max_{D,D'}|f(D)- f(D^{\prime})|
\end{equation}

\end{definition}

Differential privacy has post-processing property, as stated in Theorem~\ref{the:post-processing}.
\begin{theorem}[Post-processing \cite{dwork2014algorithmic}]\label{the:post-processing}
Given a randomized mechanism $\mathcal{M}_1$ that satisfies $\epsilon$-DP, then for any randomized algorithm $\mathcal{M}_2$, the composition $\mathcal{M}_2(\mathcal{M}_1)$ still satisfies $\epsilon$-DP.
\end{theorem}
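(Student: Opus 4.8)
The plan is to first reduce the claim to the case where the post-processing step $\mathcal{M}_2$ is \emph{deterministic}, and then recover the general randomized case by averaging over the internal coins of $\mathcal{M}_2$. Concretely, any randomized algorithm $\mathcal{M}_2$ can be represented as a family of deterministic maps $\{g_r\}_{r}$ together with a distribution $\mu$ over the auxiliary randomness $r$, so that $\mathcal{M}_2(y)$ is distributed as $g_r(y)$ with $r\sim\mu$, and the draw of $r$ is independent of the input. Because the defining inequality of $\epsilon$-DP is preserved under convex combinations (here, integration against $\mu$), it is enough to prove the DP guarantee for each fixed $g_r$.

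So fix a deterministic map $g:\mathcal{O}\rightarrow\mathcal{O}'$, an arbitrary output event $S\subseteq\mathcal{O}'$, and a pair of neighboring databases $D,D'\in\mathcal{D}$. Let $T=g^{-1}(S)=\{y\in\mathcal{O}:g(y)\in S\}$. Then the event $\{g(\mathcal{M}_1(D))\in S\}$ coincides with the event $\{\mathcal{M}_1(D)\in T\}$, and likewise for $D'$, so
\[
\Pr[g(\mathcal{M}_1(D))\in S]=\Pr[\mathcal{M}_1(D)\in T]\le e^{\epsilon}\,\Pr[\mathcal{M}_1(D')\in T]=e^{\epsilon}\,\Pr[g(\mathcal{M}_1(D'))\in S],
\]
where the middle inequality is precisely the $\epsilon$-DP property of $\mathcal{M}_1$ applied to the subset $T\subseteq\mathcal{O}$.

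Finally I would assemble the two pieces: integrating the per-$r$ inequality against $\mu$ and using independence of $r$ from the randomness of $\mathcal{M}_1$ gives
\[
\Pr[\mathcal{M}_2(\mathcal{M}_1(D))\in S]=\int\Pr[g_r(\mathcal{M}_1(D))\in S]\,d\mu(r)\le e^{\epsilon}\int\Pr[g_r(\mathcal{M}_1(D'))\in S]\,d\mu(r)=e^{\epsilon}\,\Pr[\mathcal{M}_2(\mathcal{M}_1(D'))\in S],
\]
which is exactly $\epsilon$-DP for the composition $\mathcal{M}_2(\mathcal{M}_1)$. The one technical point to be careful about is the measure-theoretic bookkeeping when $\mathcal{O}$ is continuous: one must check that $g_r^{-1}(S)$ is an admissible event on which the guarantee of $\mathcal{M}_1$ can be invoked, and that Tonelli's theorem justifies swapping the average over $r$ with the probability over the coins of $\mathcal{M}_1$. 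For the discrete, count-valued data relevant to aggregated mobility publication this is automatic, and in the general case it follows from the standard model of a randomized algorithm as a measurable function on a product probability space; so I expect this to be the only place requiring any real care, and a routine one at that.
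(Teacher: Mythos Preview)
Your argument is correct and is the standard proof: reduce to a deterministic post-processing map by conditioning on the auxiliary coins, use the preimage $T=g^{-1}(S)$ to pull the event back to the range of $\mathcal{M}_1$, apply the $\epsilon$-DP guarantee there, and then average over the coins. The measure-theoretic caveats you flag are the right ones and are indeed routine.

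There is nothing to compare against in the paper itself: the theorem is stated with a citation to \cite{dwork2014algorithmic} and is not proved in the text. Your write-up is essentially the proof given in that reference, so in that sense it matches the intended argument.
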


For sequential composition and parallel composition, differential privacy can be guaranteed by Theorem~\ref{the:sequential} and Theorem~\ref{the:parallel}, respectively.
\begin{theorem}[Sequential Composition \cite{Li2016Differential}]\label{the:sequential}
Let $\mathcal{M}_1,...,\mathcal{M}_r$ be a set of mechanisms, where $\mathcal{M}_i$ provides $\epsilon_i$-DP. Let $\mathcal{M}$ be another mechanism that executes $\mathcal{M}_{1}(D)$, $cdots$, $\mathcal{M}_{r}(D)$ using independent randomness for each $\mathcal{M}_i$, and returns the vector of the outputs of these mechanisms. Then, $\mathcal{M}$ satisfies$(\Sigma^{r}_{i=1}\epsilon_{i})$-DP.
\end{theorem}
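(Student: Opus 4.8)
The plan is to reduce the joint statement to a product of the $r$ individual privacy guarantees by exploiting the independence of the internal randomness. Fix any pair of neighboring databases $D, D' \in \mathcal{D}$ and any measurable set $\textsl{O} \subseteq \mathcal{O}_1 \times \cdots \times \mathcal{O}_r$ in the product output space. The key observation is that, because $\mathcal{M}$ runs each $\mathcal{M}_i$ with fresh independent coins, the output vector $(\mathcal{M}_1(D), \dots, \mathcal{M}_r(D))$ has a product distribution: its density (or, in the discrete case, its probability mass function) at a point $o = (o_1, \dots, o_r)$ factorizes as $\prod_{i=1}^{r} p_i^{D}(o_i)$, where $p_i^{D}$ denotes the output density of $\mathcal{M}_i$ on input $D$.

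First I would establish a pointwise ratio bound. For each coordinate $i$, the $\epsilon_i$-DP guarantee of $\mathcal{M}_i$ yields, in density form (obtained from the set-based definition of $\epsilon$-DP in the standard way), $p_i^{D}(o_i) \le e^{\epsilon_i}\, p_i^{D'}(o_i)$ for (almost) every $o_i$. Multiplying these $r$ inequalities and using the factorization gives
\[
p^{D}(o) \;=\; \prod_{i=1}^{r} p_i^{D}(o_i) \;\le\; \prod_{i=1}^{r} e^{\epsilon_i}\, p_i^{D'}(o_i) \;=\; e^{\sum_{i=1}^{r} \epsilon_i}\, p^{D'}(o),
\]
so the joint density on $D$ is everywhere at most $e^{\sum_{i} \epsilon_i}$ times the joint density on $D'$. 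Integrating this pointwise bound over the target set then gives $\Pr[\mathcal{M}(D) \in \textsl{O}] = \int_{\textsl{O}} p^{D}(o)\, do \le e^{\sum_{i} \epsilon_i} \int_{\textsl{O}} p^{D'}(o)\, do = e^{\sum_{i} \epsilon_i}\, \Pr[\mathcal{M}(D') \in \textsl{O}]$, which is exactly the $(\sum_{i=1}^{r}\epsilon_i)$-DP condition.

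An equivalent route, avoiding densities, is induction on $r$: the base case $r = 1$ is the hypothesis on $\mathcal{M}_1$, and for the inductive step one writes $\mathcal{M} = (\mathcal{M}_1, \mathcal{M}')$, where $\mathcal{M}'$ is the composition of $\mathcal{M}_2, \dots, \mathcal{M}_r$ and satisfies $(\sum_{i=2}^{r}\epsilon_i)$-DP by the induction hypothesis; one then conditions on the first output, uses independence to drop the conditioning, and applies the two bounds successively (with Fubini to interchange the order of integration over the two coordinate blocks).

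The factorization and the arithmetic are routine; the only point requiring care is the passage between the set-based definition of $\epsilon$-DP and the pointwise density statement — in particular, the measure-theoretic justification that an (almost-everywhere) pointwise ratio bound on densities transfers to every measurable set, handled uniformly for the discrete and continuous cases. This is where I would be most careful, but it is standard and presents no genuine obstacle.
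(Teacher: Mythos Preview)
Your proof is correct and follows the standard argument for sequential composition: use independence to factorize the joint output distribution, apply each $\epsilon_i$-DP bound coordinatewise, and integrate (or equivalently induct on $r$). The paper, however, does not supply its own proof of this theorem; it merely states it as a cited preliminary result from \cite{Li2016Differential} and then invokes it later when analyzing the privacy of the proposed mechanisms. So there is no paper proof to compare against, and your argument is precisely the textbook one that the citation points to.
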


\begin{theorem}[Parallel Composition \cite{Li2016Differential}]\label{the:parallel}
Let $\mathcal{M}_1$, $\cdots$, $\mathcal{M}_k$ be $k$ mechanisms that satisfy $\epsilon_1$-DP, $\ldots$, $\epsilon_k$-DP, respectively. For a deterministic partitioning function $f$, let $D_1,...,D_k$ be respectively the resulting partitions of excuting $f$ on dataset $D$. Publishing the results of $\mathcal{M}_1(D_1),\ldots, \mathcal{M}_k(D_k)$  satisfies $(\max_{i\in[1,...,k]}\epsilon_{i})$-DP.
\end{theorem}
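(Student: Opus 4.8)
The plan is to reduce everything to the single block of the partition on which the two neighboring databases actually differ, and then exploit the independence of the $\mathcal{M}_i$. First I would fix an arbitrary pair of neighboring databases $D, D' \in \mathcal{D}$, differing in exactly one record, and apply the deterministic partitioning function $f$ to obtain the blocks $D_1,\ldots,D_k$ and $D'_1,\ldots,D'_k$. Since $f$ is deterministic and sends each record to a single block, the one record on which $D$ and $D'$ disagree lands in exactly one block, say block $j$; hence $D_i = D'_i$ for every $i \neq j$, while $D_j$ and $D'_j$ are themselves neighboring (they differ in precisely that record). This structural observation is what makes a $\max$ rather than a sum appear in the bound, because only block $j$ will ever incur a privacy loss.

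Next I would analyze the joint output. Writing $\mathcal{M}(D) = (\mathcal{M}_1(D_1),\ldots,\mathcal{M}_k(D_k))$ and using that the $\mathcal{M}_i$ draw independent randomness, for any product event $O = O_1 \times \cdots \times O_k$ we get $\Pr[\mathcal{M}(D) \in O] = \prod_{i=1}^{k} \Pr[\mathcal{M}_i(D_i) \in O_i]$. For each $i \neq j$ the $i$-th factor is identical on the $D$-side and the $D'$-side because $D_i = D'_i$; for $i = j$, the $\epsilon_j$-DP guarantee of $\mathcal{M}_j$ applied to the neighbors $D_j, D'_j$ gives $\Pr[\mathcal{M}_j(D_j) \in O_j] \le e^{\epsilon_j}\,\Pr[\mathcal{M}_j(D'_j) \in O_j]$. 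Multiplying the factors yields $\Pr[\mathcal{M}(D) \in O] \le e^{\epsilon_j}\,\Pr[\mathcal{M}(D') \in O] \le e^{\max_i \epsilon_i}\,\Pr[\mathcal{M}(D') \in O]$, which is the desired inequality restricted to rectangles.

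Finally I would promote this from product events to an arbitrary measurable set $O \subseteq \mathcal{O}_1 \times \cdots \times \mathcal{O}_k$. In the discrete case this is immediate by summing the bound over singletons; in general one notes that the rectangles generate the product $\sigma$-algebra and that the pointwise density bound (or a standard extension via countable disjoint unions together with the equality of total probability mass established above) carries the inequality to all measurable $O$. I expect the main obstacle to be purely the first step — making precise that a deterministic partition forces $k-1$ of the sub-instances to be literally equal and leaves only one neighboring pair; once that is nailed down, independence plus the per-mechanism guarantee finish the argument, and the relevant composition theorem for the case where $\mathcal{M}$ returns the full vector of outputs follows.
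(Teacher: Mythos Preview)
Your proof is correct and follows the standard argument for parallel composition: identify the unique block $j$ where the neighboring databases differ, use independence to factor the joint probability, apply the $\epsilon_j$-DP bound on that block alone, and bound $\epsilon_j$ by $\max_i \epsilon_i$. The extension from rectangles to general measurable sets is handled appropriately.

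Note, however, that the paper does not actually prove this theorem. It is stated in the Preliminaries section as a known result cited from \cite{Li2016Differential}, alongside the post-processing and sequential composition theorems, all without proof. So there is no paper proof to compare against; your argument simply fills in what the paper takes for granted. If anything, your write-up is more detailed than what typically appears even in the cited sources, particularly in the care taken over the passage from product events to arbitrary measurable sets.
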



In practice, there are different methods to achieve DP. The Laplace mechanism and the exponential mechanism \cite{Dwork2006Differential,Li2016Differential,Mcsherry2007Mechanism} are two basic mechanisms. The Laplace mechanism is suitable for the protection of numerical results, and the exponential mechanism is also suitable for the protection of non-numeric results.

$\mathbf{Laplace\: Mechanism.}$ The Laplace mechanism implements $\epsilon$-DP by adding random noise following the Laplace distribution to query results, as described in Theorem~\ref{the:laplace}.
\begin{theorem}\label{the:laplace}
For any function $f$: $\mathcal{D}\rightarrow \mathbb{R}^{d}$, the mechanism $\mathcal{M}$
\begin{equation}
  \mathcal{M}(\mathcal{D}) = f(\mathcal{D})+Lap(\Delta f/\epsilon)
\end{equation}
gives $\epsilon$-DP.
\end{theorem}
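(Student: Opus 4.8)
The plan is to compare the probability density functions of $\mathcal{M}(D)$ and $\mathcal{M}(D')$ pointwise, show their ratio is bounded by $e^{\epsilon}$ everywhere, and then integrate this bound over an arbitrary output region. First I would fix any pair of neighboring databases $D,D'$ and write the density of $\mathcal{M}(D)=f(D)+\mathrm{Lap}(\Delta f/\epsilon)$ at an arbitrary point $z\in\mathbb{R}^{d}$. Because the Laplace noise is drawn independently for each of the $d$ coordinates, this density factorizes as $p_{D}(z)=\prod_{i=1}^{d}\frac{\epsilon}{2\Delta f}\exp\!\left(-\frac{\epsilon\,|f(D)_{i}-z_{i}|}{\Delta f}\right)$, and likewise for $p_{D'}(z)$ with $f(D')$ in place of $f(D)$.

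Next I would form the ratio $p_{D}(z)/p_{D'}(z)$: the normalizing constants cancel, leaving $\exp\!\left(\frac{\epsilon}{\Delta f}\sum_{i=1}^{d}\bigl(|f(D')_{i}-z_{i}|-|f(D)_{i}-z_{i}|\bigr)\right)$. Applying the reverse triangle inequality coordinatewise gives $|f(D')_{i}-z_{i}|-|f(D)_{i}-z_{i}|\le|f(D)_{i}-f(D')_{i}|$, so the exponent is at most $\frac{\epsilon}{\Delta f}\,\|f(D)-f(D')\|_{1}\le\frac{\epsilon}{\Delta f}\cdot\Delta f=\epsilon$, where the last inequality is exactly the definition of global sensitivity (with $|f(D)-f(D')|$ read as the $L_{1}$ norm). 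Hence $p_{D}(z)\le e^{\epsilon}p_{D'}(z)$ for every $z$. Then for any measurable $O\subseteq\mathcal{O}$, integrating yields $\Pr[\mathcal{M}(D)\in O]=\int_{O}p_{D}(z)\,dz\le e^{\epsilon}\int_{O}p_{D'}(z)\,dz=e^{\epsilon}\Pr[\mathcal{M}(D')\in O]$, which is the $\epsilon$-DP condition.

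The main obstacle is not a deep difficulty but a matter of stating the hypotheses precisely: one must make explicit that $\Delta f$ denotes the $L_{1}$ sensitivity and that the noise vector consists of i.i.d.\ $\mathrm{Lap}(\Delta f/\epsilon)$ entries (the one-line formula in the statement is really the coordinatewise recipe, with the scalar case $d=1$ as a special instance). One should also note that both densities are everywhere strictly positive, so the ratio is always well defined and no degenerate case needs separate treatment; the pointwise bound then transfers to sets by monotonicity of the integral.
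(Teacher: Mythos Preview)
Your argument is correct and is precisely the standard proof of the Laplace mechanism (as in Dwork--Roth): write the product Laplace density, take the pointwise ratio, bound the exponent via the triangle inequality and the $L_1$ sensitivity, then integrate. Note, however, that the paper does not actually prove this theorem; it is stated in the preliminaries as a known result and implicitly attributed to the differential-privacy literature, so there is no paper proof to compare against. Your caveat about interpreting $\Delta f$ as $L_1$ sensitivity and the noise as i.i.d.\ coordinatewise Laplace is exactly the right clarification, since the paper's Definition~2 writes $|f(D)-f(D')|$ without specifying the norm.
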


Here, $Lap(b)$ represents the Laplace distribution with a scale parameter $b$, and the corresponding probability density function is as follows.
\begin{equation}\label{theorem}
  p(x) = \frac{1}{2b}\exp(-\frac{|x|}{b})
\end{equation}

In many practical applications, the query results may be non-numeric (such as a scheme or a choice). In this regard, McSherry et al. \cite{Mcsherry2007Mechanism} proposed the exponential mechanism.

$\mathbf{Exponential\:Mechanism.}$ Let the output range of the mechanism $\mathcal{M}$ be $\mathcal{R}$, $q(D,r)$ be the utility function used to evaluate the pros and cons of the output $r \in \mathcal{R}$, the exponential mechanism is as follows.
\begin{theorem}
Given a utility function $q:(\mathcal{D}\times \mathcal{R})\rightarrow \mathbb{R}$ for a database $D \in \mathcal{D}$, the mechanism $\mathcal{M}$,
\begin{equation}\label{throrem}
  \mathcal{M}(D,q)=\{return\: r\: with\: probility\: \propto \exp(\frac{\epsilon q(D,r)}{2\Delta q})\}
\end{equation}
gives $\epsilon$-DP, where $\Delta q$ is the sensitivity of function $q$, and is defined as
\begin{equation}\label{throrem}
  \Delta q = \max_{r \in \mathcal{R}} \max_{D_1, D_2} |q(D_1,r) - q(D_2,r)|
\end{equation}
where $D_1$ and $D_2$ are any pair of neighboring databases.
\end{theorem}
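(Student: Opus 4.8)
The plan is to establish the definitional inequality by first proving a pointwise bound on the output densities and then integrating it over an arbitrary event. Fix a pair of neighboring databases $D_1,D_2$ and an output $r\in\mathcal{R}$, and write $Z(D)=\sum_{r'\in\mathcal{R}}\exp\left(\frac{\epsilon q(D,r')}{2\Delta q}\right)$ for the normalizing constant (a convergent integral rather than a sum in the continuous case). Then $\Pr[\mathcal{M}(D,q)=r]=\exp\left(\frac{\epsilon q(D,r)}{2\Delta q}\right)/Z(D)$, so the ratio $\Pr[\mathcal{M}(D_1,q)=r]/\Pr[\mathcal{M}(D_2,q)=r]$ factors as the product of a numerator term $\exp\left(\frac{\epsilon(q(D_1,r)-q(D_2,r))}{2\Delta q}\right)$ and a normalizer term $Z(D_2)/Z(D_1)$.

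First I would bound the numerator term: by the definition of $\Delta q$ we have $|q(D_1,r)-q(D_2,r)|\le\Delta q$, so this term is at most $\exp(\epsilon/2)$. Next I would bound the normalizer term using the same per-output estimate inside the sum: for every $r'$ we have $\exp\left(\frac{\epsilon q(D_2,r')}{2\Delta q}\right)\le\exp(\epsilon/2)\exp\left(\frac{\epsilon q(D_1,r')}{2\Delta q}\right)$, and summing over $r'$ gives $Z(D_2)\le\exp(\epsilon/2)\,Z(D_1)$, hence $Z(D_2)/Z(D_1)\le\exp(\epsilon/2)$. Multiplying the two bounds yields $\Pr[\mathcal{M}(D_1,q)=r]\le e^{\epsilon}\Pr[\mathcal{M}(D_2,q)=r]$ for every $r$, and summing (integrating) this over $r\in O$ gives $\Pr[\mathcal{M}(D_1,q)\in O]\le e^{\epsilon}\Pr[\mathcal{M}(D_2,q)\in O]$, which is exactly $\epsilon$-DP.

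The step that deserves attention is the role of the factor $2$ in the exponent denominator $2\Delta q$: it is precisely what makes the two $\exp(\epsilon/2)$ contributions --- one from the shift in $q(\cdot,r)$ and one from the shift in the normalizing constant --- combine to $e^{\epsilon}$ rather than $e^{2\epsilon}$; replacing $2\Delta q$ by $\Delta q$ would only yield $2\epsilon$-DP, so this is the quantitatively essential point. Two minor issues should be dispatched at the outset: if $\Delta q=0$ then $q$ is constant across any pair of neighbors and $\mathcal{M}$ is trivially $0$-DP, and in the continuous setting one must assume $Z(D)<\infty$ so that the output distribution is well defined, after which the pointwise density bound passes to measurable sets by monotonicity of the integral.
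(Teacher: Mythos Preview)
Your proof is correct and is in fact the standard argument for the exponential mechanism due to McSherry and Talwar: factor the probability ratio into a numerator shift and a normalizer shift, bound each by $e^{\epsilon/2}$ via the sensitivity, multiply, and pass to events by summation/integration. Your side remarks on the role of the factor $2$, the degenerate case $\Delta q=0$, and the need for $Z(D)<\infty$ are all appropriate.

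There is nothing to compare against, however: the paper does not give its own proof of this theorem. It appears in the Preliminaries section as a cited background result (attributed to \cite{Mcsherry2007Mechanism}) alongside the Laplace mechanism, with no accompanying argument. So your write-up supplies a proof where the paper simply quotes the statement.
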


\subsection{Linear Regression}

Linear regression \cite{Eberly2003Correlation} is one of the most basic methods of machine learning \cite{Mohri2018Foundations}. It is a linear model that makes predictions through a linear combination of attributes. It aims to find a line or a plane or a higher-dimensional hyperplane that minimizes the error between predicted values and real values. The results obtained by linear regression are well interpretable and the calculation entropy is not complicated. Linear regression can be divided into unitary regression and multiple regression. Unitary regression means that there is only one impact factor, and a familiar linear equation is involved. For multiple regression, there are multiple impact factors. In this paper, we use unitary regression to make predictions.

$\mathbf{Hypothesis \:Function.}$ The purpose of regression is to predict the target value of another numerical data from several known data. It is assumed that the features and results satisfy a linear relationship (taking the formula $h_{\theta}(x) = \theta_{0} +\theta_{1}x$ as an example in the following). This relationship is called regression equation, and it is a mathematical model chosen to fit some of the data given already.

$\mathbf{Cost \:Function.}$ Through the linear regression algorithm, we may get a lot of linear regression models, but different models have different abilities to fit the data. We need to find a linear regression model that best describes the relationship between the data. The cost function is used to describe the difference between the linear regression model and the real data. If there is no difference at all, the linear regression model fully describes the data relationship. If we need to find the best fitting linear regression model, we need to minimize the corresponding cost function. And it usually defined as $J(\theta_{0},\theta_{1}) = \frac{1}{2n}\sum_{i=0}^{n}(h_{\theta}(x_i)-y_i)^2$

$\mathbf{Gradient \:Decent.}$ In order to find the optimal hypothesis function parameters, the goal is to make the cost function take the smallest value. Gradient descent can help us find the local minimum point of a function. It can be used not only in linear regression model, but also in many other machine learning models. The idea is as follows:
(1) initialization parameters $\theta_{0}$ and $\theta_{1}$;
(2) continuously update their values at the same time to make them smaller, and finally find the minimum value point of $J(\theta_{0},\theta_{1})$.

\section{System Model}\label{sec:system-model}

In this section, we formally describe the problem definition and the attack model, and then especially review the attack proposed in \cite{Xu2017Trajectory}.

\subsection{Problem Definition}

Fig.~\ref{fig:system-model} shows the system model. In this model, we let $M$ denote the number of base stations in a city, and $N$ represent the total number of mobile users which are covered by all base stations in a city. At time stamp $t_{i}$, let $u_{m}^{i}$ denote the number of mobile users at the $m_{th}$ base station, and let $l_{n}^{i}$ denote the base station where the $n_{th}$ user is located. We define an aggregated mobility dataset as $D = \{D_i\}_{i=1}^S$, and $D_i=\{u_1^i,u_2^i,...,u_M^i\}$ representing the histogram for numbers of mobile users in all base stations at time stamp $t_i$. Without loss of generality, we assume that the time duration for data publications is a day. It is obvious to extend the time duration to any length.
For each time stamp $t_i$, we need to release a private aggregated dataset $\widetilde{D}_{i}$. And finally, we need to release a noisy version $\widetilde{D}$ for the data $D$. We summarize all notations used in this paper in Table~\ref{tab:notations}.

\begin{figure}
\includegraphics[scale=0.38]{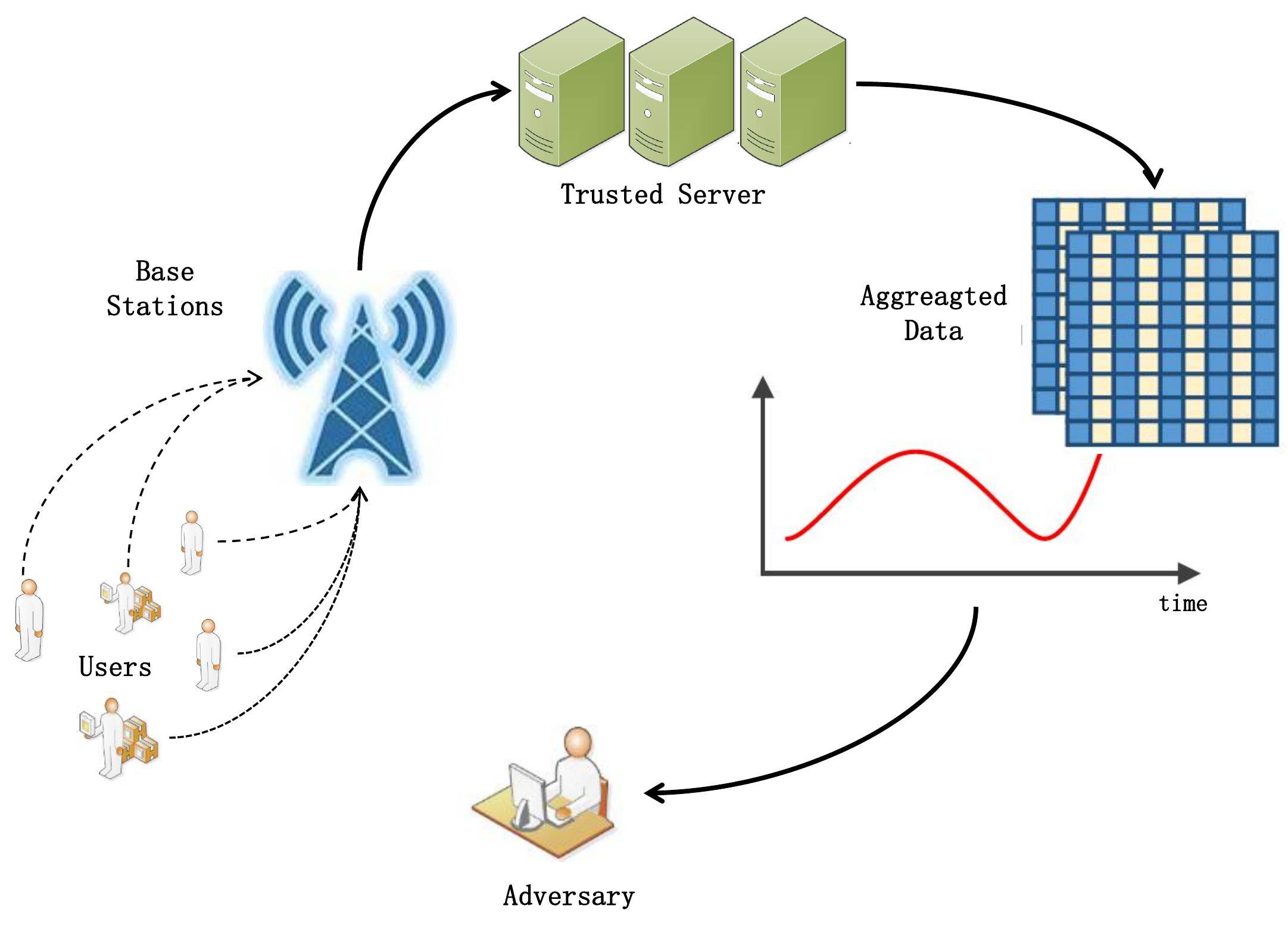}
\caption{System Model}\label{fig:system-model}
\end{figure}

\begin{table}[tbp]\small
\centering
\begin{tabular}{|c|c|}
\hline
Notation & Descriptioin\\
\hline
$D$ & Original aggregated dataset \\
\hline
$D'$ & Neighboring dataset \\
\hline
$\widetilde{D}$ & DP aggregated dataset \\
\hline
$D_{i}$ & Original aggregated dataset at time point $t_{i}$\\
\hline
$\widetilde{D}_{i}$ & DP aggregated dataset at time point $t_{i}$\\
\hline
$M$ & Number of base stations\\
\hline
$N$ & Number of mobile users\\
\hline
$u^{i}_{m}$ & \makecell{Number of mobile users at the $m_{th}$ base station \\ at time point $t_{i}$}\\
\hline
$l^{i}_{n}$ & \makecell{Base station location of the $n_{th}$ user \\ at time point $t_{i}$}\\
\hline
$\epsilon$ & Total privacy budget\\
\hline
$S$ &Number of time stamps\\
\hline
$\epsilon_{1}$ & Privacy budget for comparing with threshold\\
\hline
$\epsilon_{2}$ & Privacy budget for perturbaing with threshold \\
\hline
$\epsilon_{l}$ & Left privacy budget during threshold perturbation \\
\hline
$T$ & threshold\\
\hline
$\epsilon_{s}$ & Privacy budget for time division\\
\hline
$\epsilon_{d}$ & Privacy budget for direct perturbation \\
\hline
$\epsilon_{t}$ & Privacy budget for threshold perturbation \\
\hline
$\hat{t}_1$ & the first time division point \\
\hline
$\hat{t}_2$ & the second time division point \\
\hline
\end{tabular}
\caption{notations}\label{tab:notations}
\end{table}

\subsection{Attack Model}

We assume that the service provider is trustable, and it carries out the data publication process honestly. Also, the service provider is responsible for doing its best to protect users' location privacy. The data then release to some data consumers, which may be adversaries. We assume that the adversaries have not any prior information of the targeted data. They do their best to analyze the data released, and try to infer as much as possible location information of single mobile users.


Under this attack model, literature \cite{Xu2017Trajectory} has proposed an attack method based on the regularity and uniqueness of mobile users' movements. This attack aims to recover mobile users' individual trajectories from the the aggregated mobility data with an unsupervised method. It iteratively associates the same user's movement records in adjacent time stamps and gradually recovers the entire trajectory. In each time stamp, the attack process can be divided into two steps. Firstly, the possibility that next position $l_{n}^{i}$ belongs to a given trajectory can be estimated by utilizing the characteristics of human mobility. Secondly, an optimal solution is obtained by the idea of linear sum assignment, so that it can link the mobile users' trajectories with the next record of mobility.

\section{Basic Schemes}\label{sec:basic}
In this section, we propose two basic schemes, the direct perturbation scheme based on the Laplace mechanism \cite{Dwork2006Differential} and the threshold perturbation scheme based on the sparse vector technique \cite{Dwork2006Differential}. We also propose a consistency post-process mechanism, which can be applied to the basic schemes, as well as the improved schemes in next section, for preserving the consistency of noisy aggregated mobility data.

\subsection{Direct Perturbation}

To publish the aggregated mobility data $D = \{D_i\}_{i=1}^S$ with $\epsilon$-differential privacy, where $D_i=\{u_1^i,u_2^i,...,u_m^i\}$, a naive idea is to perturb the histogram $D_i$ at each time stamp $t_i$ with the privacy budget $\epsilon/S$ by applying Laplace mechanism. Then, in terms of the sequential composition theorem, we get a differentially private data publication scheme as required. Alg.~\ref{alg:direct-perburbation} shows the procedure of the direct perturbation mechanism.

\renewcommand{\algorithmicrequire}{\textbf{Input:}}  
\renewcommand{\algorithmicensure}{\textbf{Output:}}  

\begin{algorithm}[htb]
\caption{Direct Perturbation Mechanism} \label{alg:direct-perburbation}
\begin{algorithmic}[1]

\REQUIRE Dataset $D = \{D_i\}_{i=1}^S$, privacy budget $\epsilon$, histogram sensitivity $\Delta_H$
\ENSURE Noisy dataset $\widetilde{D} = \{\widetilde{D}_i\}_{i=1}^S$

\STATE Set $b = S \cdot \Delta_H/\epsilon$;
\FOR{$i=0$ to $S$}
\STATE Generate $n^i = (n^i_1,n^i_2...,n^i_m)$ with i.i.d. $Lap(b)$;
\STATE Set $\widetilde{D}_i = D_i + n^i $;
\ENDFOR
\RETURN $\widetilde{D}=\{\widetilde{D}_i\}_{i=1}^S$;

\end{algorithmic}
\end{algorithm}


Note that in Alg.~\ref{alg:direct-perburbation}, we regard two aggregated mobility datasets $D$ and $D'$ that differ in only a user's trajectory as neighboring sets. More specifically, if $D'$ contains the same user trajectories as $D$ except excluding one, we say $D$ and $D'$ are neighboring sets. According to this definition of neighboring sets, we can compute the histogram sensitivity $\Delta_H$ of the direction perturbation is 1.

The direct perturbation mechanism simply achieves differential privacy. However, as the increase of the number of time stamps $S$, the Laplace noise grows larger, and the data utility becomes worse. The reason is that this perturbation mechanism perturbs the histogram of each time stamp independently, without considering the dependences between histograms of adjacent time stamps.

\subsection{Threshold Perturbation}

An improvement on the naive idea is that, we do not have to perturb the histogram of very time stamp, but just perturb the histograms when there are significant changes. It would be a great waste of privacy budget if the data change little, but they are perturbed independently for each time stamp. For this reason, we can use the threshold perturbation based on the sparse vector technique. The threshold perturbation has been adapted from \cite{Li2015Differentially}, and the sparse vector technique takes the form of that in \cite{dwork2014algorithmic}.

Alg.~\ref{alg:threshold} shows the threshold perturbation mechanism.
In the algorithm, we first divide the privacy budget $\epsilon$ into two parts, $\epsilon_1$ and $\epsilon_2$, where the former is for comparing with the threshold and the latter is for perturbing data (Line~\ref{line:epsilon12}). Next, the first histogram $D_1$ is directly perturbed with $Lap(c\Delta/\epsilon_1)$ noise (Line~\ref{line:perturb1}), the first noisy threshold $\widetilde{T}_1$ is set (Line~\ref{line:tildeT1}), and the count of perturbations $cnt$ is initialized (Line~\ref{line:cnt1}). Then, the algorithm loops to perturb the data from time stamp $2$ to $S$, repeatedly (Lines~\ref{line:for} to \ref{line:endfor}). There are three cases for this loop to deal with. First, when the privacy budget runs out, all left histograms are released with the last perturbed one (Lines~\ref{line:ifcnt} to \ref{line:endifcnt}). Second, if there are some privacy budget left when perturbing the last histogram, all the left budget $\epsilon_{l}$ is used to released the histogram (Lines~\ref{line:ifi} to \ref{line:endifi}). Third, in the middle of the loop, each histogram is released by comparing a noisy distance and a noisy threshold, and the noisy histogram can be a newly perturbed one or the previously perturbed one accordingly (Lines~\ref{line:dist} to \ref{line:endifd}). Here, the threshold $T$ is a fixed value determined by the distribution of users at different time stamps.

\renewcommand{\algorithmicrequire}{\textbf{Input:}}  
\renewcommand{\algorithmicensure}{\textbf{Output:}}  

\begin{algorithm}[htb]
\caption{\small Threshold Perturbation Mechanism} \label{alg:threshold}
\begin{algorithmic}[1]\small

\REQUIRE Dataset $D = \{D_i\}_{i=1}^S$, privacy budget $\epsilon$, threshold $T$, distance sensitivity $\Delta_D$, cutoff point $c$

\ENSURE Noisy dataset $\widetilde{D} = \{\widetilde{D}_i\}_{i=1}^S$

\STATE Set $\epsilon_1 = \alpha \epsilon$, $\epsilon_2 =(1-\alpha)\epsilon$;\label{line:epsilon12}

\STATE Perturb $D_1$ with $Lap(c\Delta_D/\epsilon_2)$ and get $\widetilde{D}_1$;\label{line:perturb1}

\STATE Set $\widetilde{T}_1 = T + Lap(c\Delta_D/\epsilon_1)$;\label{line:tildeT1}

\STATE Set $cnt = 1$;\label{line:cnt1}

\FOR{$i=2$ to $S$}\label{line:for}

\IF{$cnt \ge c$}\label{line:ifcnt}
\STATE Set $\widetilde{D}_i = \widetilde{D}_{i-1}$;
\STATE \textbf{continue};
\ENDIF\label{line:endifcnt}

\IF{$i==S$}\label{line:ifi}
\STATE Perturb $D_S$ with $Lap(\Delta_D/\epsilon_{l})$ and get $\widetilde{D}_S$;
\STATE \textbf{continue};
\ENDIF\label{line:endifi}

\STATE Set $\widetilde{d}_i = \emph{Dist}(\widetilde{D}_{i-1},D_i) + Lap(\frac{2c\Delta_D}{\epsilon_{1}})$;\label{line:dist}

\IF {$\widetilde{d}_i \ge \widetilde{T}_{cnt}$}
\STATE Perturb $D_i$ with $Lap(c\Delta_D/\epsilon_2)$ and get $\widetilde{D}_i$;
\STATE Set $cnt = cnt + 1$;
\STATE Set $\widetilde{T}_{cnt}=T + Lap(c\Delta_D/\epsilon_1)$;
\ELSE
\STATE Set $\widetilde{D}_i = \widetilde{D}_{i-1}$;
\ENDIF\label{line:endifd}

\ENDFOR\label{line:endfor}

\RETURN $\widetilde{D}$;

\end{algorithmic}
\end{algorithm}

The threshold perturbation mechanism considers the data dependences between histograms at adjacent time stamps. Specifically, it uses the previous noisy histogram as the release of the current original histogram if there are only small changes. Therefore, if there is adequate  data dependence, this mechanism can greatly reduce the use of privacy budget, and thus improve the performance of data publication. Note that we can compute $\Delta_D = 2$ due to the definition of our neighboring sets.

\subsection{Consistency Post-processing}

After perturbation, the released values may be fractional or negative. In order to preserve the validity of the data, we design a consistency post-processing mechanism. The mechanism deals with the noisy data $\widetilde{D}$, meeting the two following conditions: (a) the value of each element is a non-negative integer, (b) the total numbers of users after post-processing remains the same as that of the perturbed data. This processing also has the impact that it can reduce the error caused by the introduction of noise and improve the data utility. We detail how to implement the consistency post-process mechanism in Alg.~\ref{alg:consistency}.

\renewcommand{\algorithmicrequire}{\textbf{Input:}}  
\renewcommand{\algorithmicensure}{\textbf{Output:}}  

\begin{algorithm}[htb]
\caption{Consistency Post-process Mechanism} \label{alg:consistency}
\begin{algorithmic}[1]

\REQUIRE Noisy dataset before post-processing $\widetilde{D}$
\ENSURE Noisy dataset after post-processing $\widetilde{D}$

\STATE Set $\widetilde{D} = round(\widetilde{D})$;\label{line:round}
\FOR{$i=1$ to $S$}\label{line:fori}

\STATE Set $\emph{cnt} = 0$;
\FOR {$j=1$ to $M$}
\IF{$\widetilde{u}_j^{i}<0$}
\STATE Set $\emph{cnt} = \emph{cnt} + |\widetilde{u}_j^{i}|$;
\STATE Set $\widetilde{u}_j^{i} = 0$;
\ENDIF
\ENDFOR

\WHILE {$\emph{cnt} > 0$ \AND $|\widetilde{D}_i|>0$}
\STATE 	Find $j \in_R \{1,2,\cdots,M\}$;\
\IF{$\widetilde{u}_j^{i} > 0$}
\STATE Set $\widetilde{u}_j^{i} = \widetilde{u}_j^{i} - 1$;
\STATE Set $\emph{cnt} = \emph{cnt} - 1$;
\ENDIF
\ENDWHILE

\ENDFOR\label{line:endfori}

\RETURN $\widetilde{D}$

\end{algorithmic}
\end{algorithm}

In Alg.~\ref{alg:consistency}, the algorithm first rounds the value of each element of noisy data $\widetilde{D}$ to an integer (Line~\ref{line:round}). Then a FOR loop is executed to do the post-processing for the histograms of all time stamps (Lines~\ref{line:fori} to \ref{line:endfori}). The loop first sets all negative values of the histogram of the current time stamp to 0, and then make appropriate adjustments to other positive values, such that the sum of each noisy histogram remains the same.

We can show that the post-processing mechanism improves the data utility. Equivalently, we show that the post-processing reduces the $L_1$ distance between a noisy histogram and its original one. First, it is obvious that the rounding has no impact upon the distance on average. Second, the non-negative processing reduces the distance. The reason is as follows. We keep the sum of the noisy histogram the same before and after post-processing, and the change from negatives to zeros reduces the distance, while the decrease for positives has the same possibility (i.e., $1/2$) to reduce and increase the distance. Thus, on average, the distance is reduced and the data utility is improved.

\section{Improved Schemes}\label{sec:improved}
In this section, we first analyze the moving characteristics of mobile users based on the aggregated mobility data. Then, we improve the basic schemes by combining them according to the moving characteristics, and design the static and dynamic hybrid perturbation mechanisms for static and dynamic data publications, respectively. Finally, we analyze the privacy of both schemes.

\subsection{Characteristic Analysis}

\begin{figure}
  \centering
  \subfigure[Synthetic Dataset]{
    \includegraphics[scale=0.24]{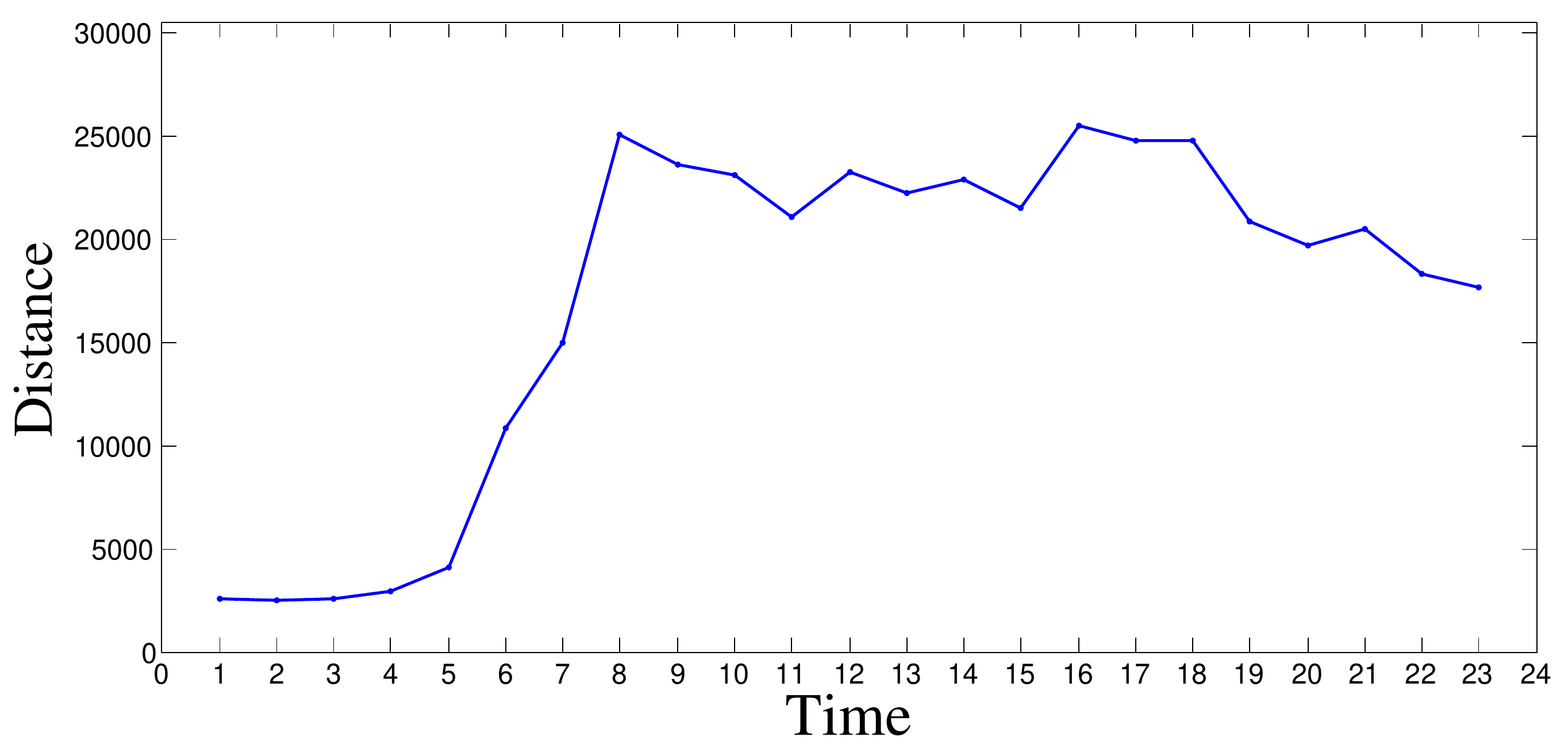}
  }
  \subfigure[Taxi Dataset]{
    \includegraphics[scale=0.24]{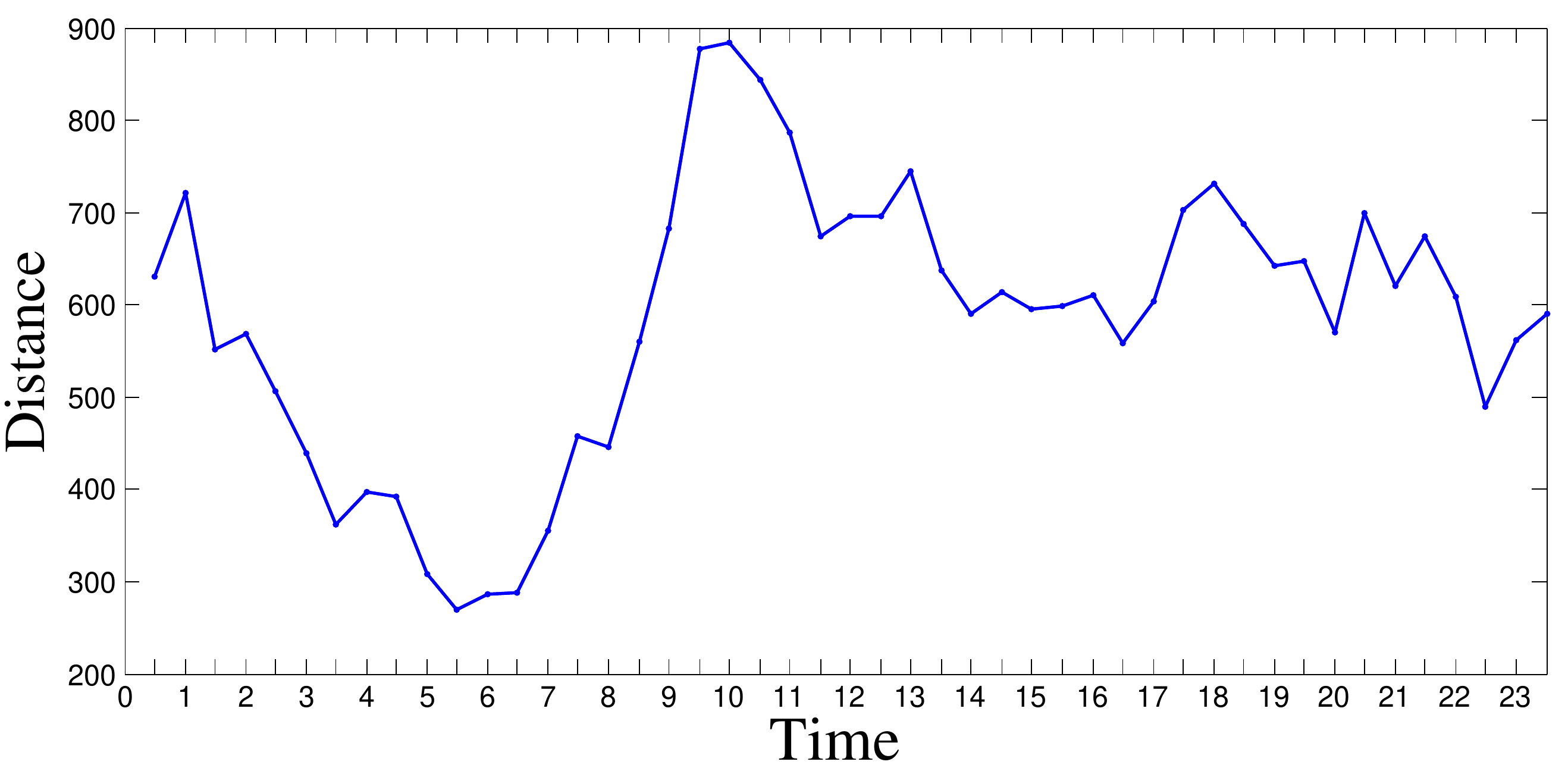}
  }
  \caption{The Distances at Adjacent Time stamps on Two Dataset}\label{fig:moving}
\end{figure}

Literatures \cite{Cao2017Spatio,Chen2018The} give a detailed analysis of temporal and spatial characteristics of urban population group activities: at different periods of a day, there are remarkable differences in the duration of time that users stay in certain locations. Specifically, the durations of stays in the nighttime are significantly longer than those in the daytime. That is, people are moving more frequently in the daytime.  In Fig.~\ref{fig:moving}, we analyze two datasets used in our experiments by computing and plotting the $L_1$ distances between histograms at adjacent time stamps in a day. The distance values represent the moving frequencies of mobile users. We found that the two datasets roughly accord with the reported moving characteristics of human beings. Specifically, during the daytime, the distances of data are large, and so is their variation. Conversely, during the night time, the distances and their variation are small.


We already know that using direct perturbation to inject noise directly into the data at each time stamp may lead to a poor utility, since the data dependence is not considered. Though the threshold perturbation can improve the data utility if there is adequate data dependence, it may also suffer a loss of data utility if there is little data dependence. The reason is that the threshold perturbation mechanism needs to compare distances and a threshold with differential privacy before publishing a histogram with a previous or a new noisy one, and this consumes extra privacy budget. Then, if users are moving frequently, and thus the resulted data are of little dependence, the threshold perturbation mechanism can cause even worse data utility than the direct perturbation mechanism. On the other hand, from the analysis of moving characteristics above, we found that the data of the daytime are of little dependence, while those of nighttime are of significant one. Therefore, our main idea is that we can divide the data into daytime and nighttime two parts, and adopt direct perturbation mechanism for publishing the daytime part and the threshold perturbation for publishing the nighttime part, respectively. Fig.~\ref{fig:framework} shows the framework of the improved schemes.
\begin{figure}
\includegraphics[scale=0.33]{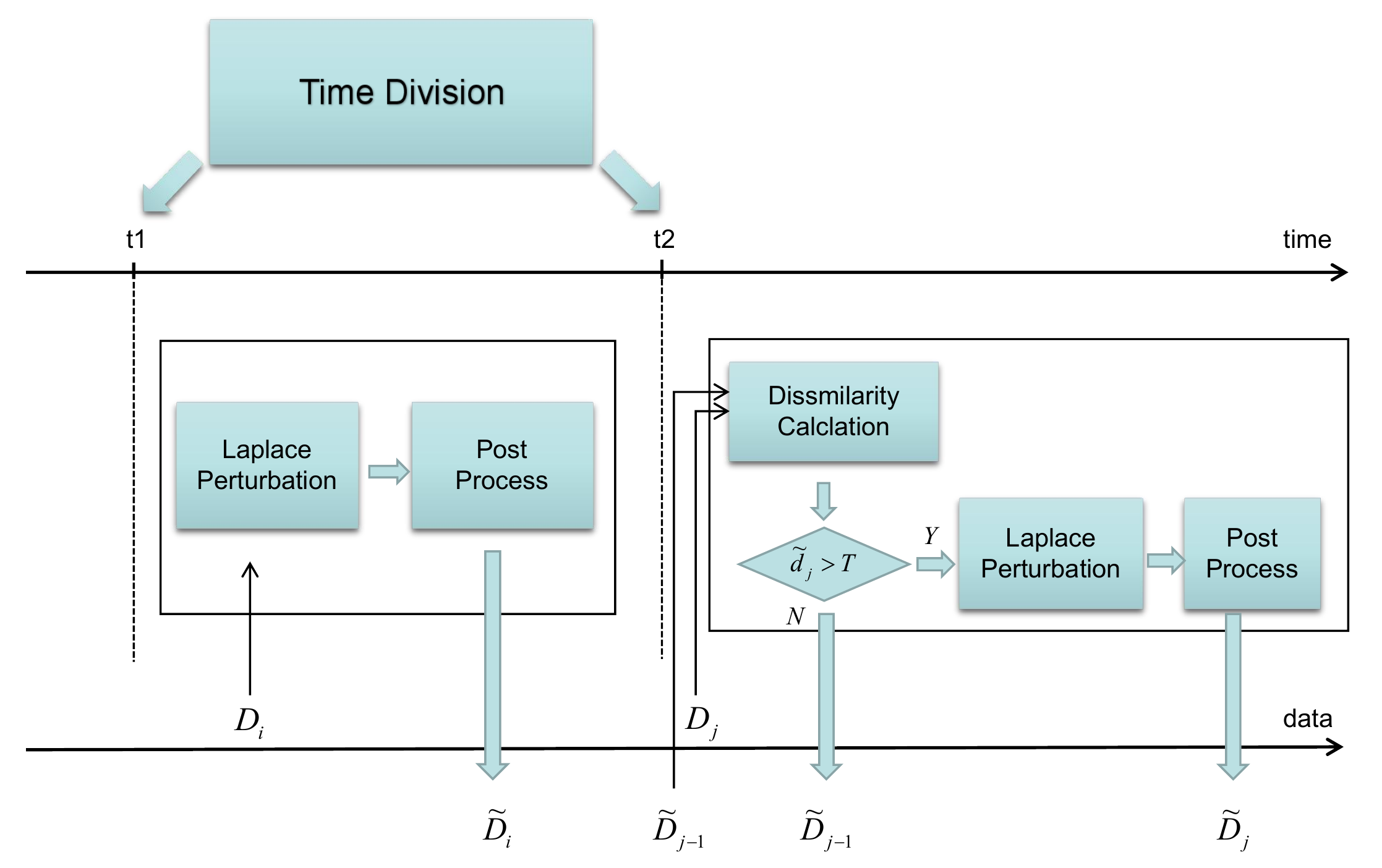}
\caption{The Framework of the Improved Schemes}\label{fig:framework}
\end{figure}

\subsection{Static Hybrid Perturbation}

Based on our main idea, we first design a hybrid perturbation for static aggregated mobility data, which are entirely obtained before publication. The main challenge to design the static hybrid perturbation mechanism is how to divide the data of a day into daytime and nighttime parts. It is noticeable that a fixed division does not preserve differential privacy. This can be illustrated by a critical user, whose trajectory will change the division points if included. Therefore, we need to perform the data division in a differentially private way, and then perturb each division of data according to their moving characteristics.

Specifically, the static hybrid perturbation selects two time division points $\hat{t}_1$ and $\hat{t}_2$ through the exponential mechanism, such that the histograms between these two time stamps change frequently, forming the daytime part of data, and then the mechanism perturbs the daytime part with direct perturbation, while perturbs other histograms, which forms the nighttime part of data, with threshold perturbation. A time division for the synthetic dataset is illustrated by Fig.~\ref{fig:division}. Then, the key is to how design a good utility function that determines how appropriate a pair of time division points are.

\begin{figure}
    \includegraphics[scale=0.24]{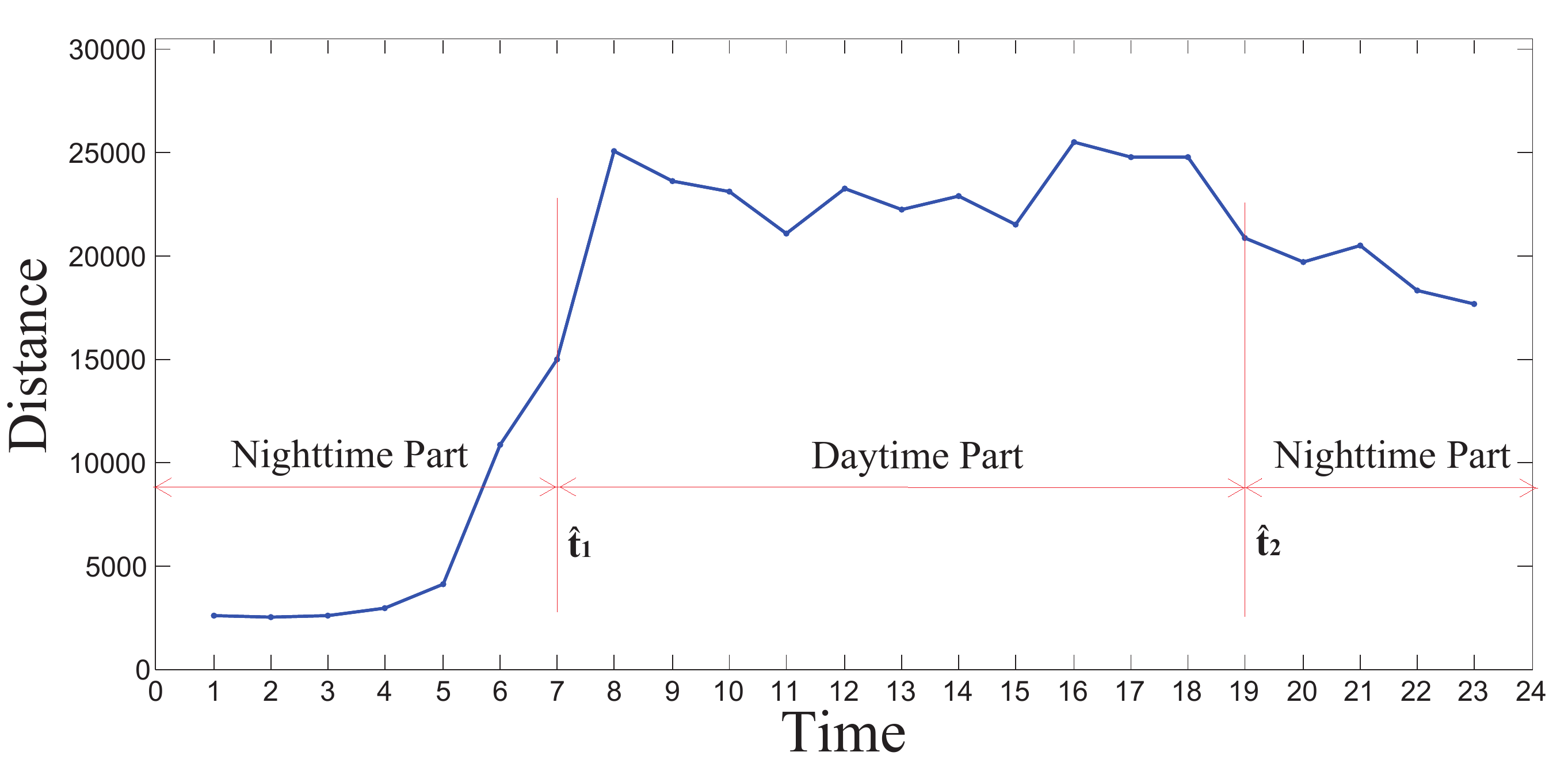}
  \caption{A Time Division for Synthetic Dataset}\label{fig:division}
\end{figure}

Observing Fig.~\ref{fig:moving}, the selection of time division points should consider following factors. First, the average distance of histograms between the two division points should be large, which represents that the histograms change greatly overall during the selected period. Second, the variance of the histograms should be small, indicating that few small distances are included. Third, the length of the selected time interval should be large enough, otherwise it is less meaningful for utility improvement. For example, in Fig.~\ref{fig:moving}, we hope that $\hat{t}_1$ and $\hat{t}_2$ fall into the working time (7:00$\sim$10:00) in the morning and the off duty time (16:00$\sim$19:00) in the afternoon, respectively. Therefore, we define the utility function as:
\begin{equation}\label{equ:utility}
U(i,j) = \log_\alpha(j-i+1)\times\frac{aveDis}{simVar}
\end{equation}
where $i$ is the first time division point, $j$ is the second one, and $i \leq j$. The related parts are described as follows.

$\mathbf{aveDis}$: It represents the average of $L_1$ distances between histograms at adjacent time stamps during the time $i$ and $j$. The larger the average distance is, the faster the histograms change. Therefore, the histograms in this time period are more suitable for direct perturbation. In our context, the average distance $aveDis$ between time stamps $i$ and $j$ is calculated by
\begin{equation}\label{equ:avedis}
aveDis = \frac{1}{j-i+1}\sum_{t=i}^{j-1}L_1(D_{t},D_{t+1})
\end{equation}

$\mathbf{simVar}$: It is a variance representing the degree of dispersion for distances during the time period between $i$ and $j$. When the variance is small and the average is large, there are equally large distances in the period, and this is ideal for the application of direct disturbance. The calculation of $simVar$ is as follows
\begin{equation}\label{equ:simvar}
simVar = \frac{1}{j-i+1}\sum_{t=i}^{j-1}|aveDis -L_1(D_{t},D_{t+1})|
\end{equation}
where when $i=j$ or $simVar < 0$, we simply set $simVar=1$.

$\mathbf{Parameter~\alpha}$: Sorely considering the two factors mentioned above does not necessarily result in a reasonable selection of division points.  For instance, when division points $i$ and $j$ are very close, and the corresponding distances are large, it then leads to that $simVar$ is small and $aveDis$ is large. Nevertheless, this is not a good selection for division points, since the time duration is too short. Therefore, we need another factor $\log_\alpha(j-i+1)$ with $\alpha > 1$ to ensure the sufficient length of the selected time interval. Empirically, $\alpha$ should be set close to the length of the interval.

$\mathbf{Sensitivity}$: The sensitivity of the utility function is $2\log_\alpha S$, where $S$ is the total number of time stamps in a day. The sensitivity can be calculated as follows. First, the sensitivity of $L_1$ distances between histograms at adjacent time stamps is at most 2, and hence the sensitivity of $aveDis$ is also 2. Second, $simVar \ge 1$ due to its definition. Third, its obvious that $j-i+1 \le S$. To sum up, we get the sensitivity $2\log_\alpha S$.

Alg.~\ref{alg:shybrid} depicts our static hybrid perturbation mechanism. The algorithm first divides $\epsilon$ into three parts, $\epsilon_s$, $\epsilon_{d}$ and $\epsilon_{t}$, for time division, direct perturbation and threshold perturbation, respectively (Line~\ref{line:budget}). Then, the utility values of $U(i,j)$ are computed in terms of Eq.~\eqref{equ:utility} (Line~\ref{line:computeU}), and the division points $\hat{t}_1$ and $\hat{t}_2$ are selected by the exponential mechanism (Lines~\ref{line:computeP} to \ref{line:select}). That is, we divide the histograms of a day into daytime part (i.e., the histograms at time stamps from $\hat{t}_1$ to $\hat{t}_2$), and the nighttime part (i.e., other histograms). Finally, the algorithm publishes the histograms at time stamps from $\hat{t}_1$ to $\hat{t}_2$ with direct perturbation mechanism, while publishes other histograms with threshold perturbation mechanism.

\renewcommand{\algorithmicrequire}{\textbf{Input:}}  
\renewcommand{\algorithmicensure}{\textbf{Output:}}  

\begin{algorithm}[htb]
\caption{Static Hybrid Perturbation Mechanism} \label{alg:shybrid}
\begin{algorithmic}[1]

\REQUIRE Dataset $D = \{D_i\}_{i=1}^S$, privacy budget $\epsilon$, threshold $T$, cutoff point $c$, sensitivity $\Delta$
\ENSURE Noisy dataset $\widetilde{D} = \{\widetilde{D}_i\}_{i=1}^S$

\STATE Divide privacy budget $\epsilon$ into $\epsilon_s$, $\epsilon_{d}$ and $\epsilon_{t}$;\label{line:budget}
\STATE Compute $U(i,j)$ for all $1 \le i \le j \le S$ by Eq.~\eqref{equ:utility};\label{line:computeU}
\STATE Compute $p_{ij} = \exp(\frac{\epsilon_{s} U(i,j)}{2\Delta})$, for all $1 \le i \le j \le S$;\label{line:computeP}
\STATE Normalize $\{p_{i,j}\}_{i \le j}$ such that $\sum_{i \le j} P_{i,j} = 1$;
\STATE Set $\mathbf{Pr} = \{p_{i,j}\}_{i \le j}$;
\STATE Select $(\hat{t}_1, \hat{t}_2)$ in terms of distribution $\mathbf{Pr}$;\label{line:select}
\STATE Call DirectPerturbation($\{D_i\}_{[\hat{t}_1,\hat{t}_2]}$, $\epsilon_{d}$);\label{line:direct}
\STATE Call ThresholdPerturbation($\{D_i\}_{[1,S]\backslash [\hat{t}_1,\hat{t}_2]}$, $\epsilon_{t}$, $c$);\label{line:threshold}

\RETURN $\widetilde{D}$

\end{algorithmic}
\end{algorithm}

It is worth noting that static hybrid perturbation is suitable for the aggregated mobility data which change greatly in some period of time, and change little other times. This quite accords with the moving characteristics of human beings in a day.

\subsection{Dynamic Hybrid Perturbation}

The static hybrid perturbation mechanism only works for static aggregated mobility data, whose histograms are all available before publication. Sometimes, we need to publish dynamic data whose histograms arrive on the fly, and the histograms after the current time stamp are not available. We propose the dynamic hybrid perturbation mechanism to deal with this situation.

The main challenge for dynamic hybrid perturbation is how we can divide the time of a day at the beginning of the publication, when we do not even know any histogram of the day. The idea is that we can make use of historical data, and leverage machine learning techniques to predict the time division. Specifically, we perform the time division in a differentially private manner for previous several days based on the historical data, and obtain the noisy division points, then adopt linear regression to train a model and predict the time division for the current day. The dynamic hybrid perturbation mechanism is described in Alg.~\ref{alg:dhybrid}.

\renewcommand{\algorithmicrequire}{\textbf{Input:}}  
\renewcommand{\algorithmicensure}{\textbf{Output:}}  

\begin{algorithm}[htb]
\caption{Dynamic Hybrid Perturbation Mechanism} \label{alg:dhybrid}
\begin{algorithmic}[1]

\REQUIRE Historical Data $\bar{D}_1$, $\bar{D}_2$,...,$\bar{D}_H$, current dataset $D$, budget $\epsilon$, threshold $T$, cutoff points $c_1$, $c_2$, sensitivity $\Delta$
\ENSURE Noisy dataset $\widetilde{D} = \{\widetilde{D}_i\}_{i=1}^S$

\STATE Divide privacy budget $\epsilon$ into $\epsilon_{d}$, $\epsilon_{t1}$ and $\epsilon_{t2}$;\label{line:budget2}
\FOR{$h=1$ to $H$}\label{line:forh}
\STATE Divide time for $\bar{D}_h$ and get division points $t_1^h$ and $t_2^h$ with budget $\epsilon$, following Lines~\ref{line:computeU} to \ref{line:select} of Alg.~\ref{alg:shybrid};
\ENDFOR\label{line:endforh}
\STATE Perform linear regression based on $\{t_1^h, t_2^h\}_{h=1}^H$, and predict division points as $\hat{t}_1$ and $\hat{t}_2$ for the current day;\label{line:regression}

\STATE Call ThresholdPerturbation($\{D_i\}_{[1,\hat{t}_1-1]}$, $\epsilon_{t1}$, $c_1$);\label{line:threshold1}
\STATE Call DirectPerturbation($\{D_i\}_{[\hat{t}_1,\hat{t}_2]}$, $\epsilon_{d}$);\label{line:directD}
\STATE Call ThresholdPerturbation($\{D_i\}_{[\hat{t}_2+1, S]}$, $\epsilon_{t2}$, $c_2$);\label{line:threshold2}

\RETURN $\widetilde{D}$
\end{algorithmic}
\end{algorithm}

In Alg.~\ref{alg:dhybrid}, the privacy budget is first divided into three parts, one for direction perturbation and other two for threshold perturbation (Line~\ref{line:budget2}), and the division points for historical data are computed with differential privacy, following the time division part of Alg.~\ref{alg:shybrid} (Lines~\ref{line:forh} to \ref{line:endforh}). Note that we assume that the historical data of each past day are independent of one another, and they are also independent of the current data when computing the historical time division points. We argue that this is reasonable, since the information released by the time division is very summary and it is actually hard to infer the dependence of this information on any individual. Next, the linear regression is trained based on the division points of historical data, and the resulted model is used to predict the time division for the current day (Line~\ref{line:regression}). Then, the data of the current day are perturbed with direct perturbation and threshold perturbation according the time division (Lines~\ref{line:threshold1} to \ref{line:threshold2}).

Now, we sketch the linear regression process as follows. We first set two hypothesis functions $T_1$ and $T_2$ corresponding to $t_1$ and $t_2$, respectively. The expressions of the hypothesis functions $T_1$ and $T_2$ are as follows:
\begin{align*}
T_1(x_1) = \theta_{1}^{0} +\theta_{1}^{1} x_1
\end{align*}
\begin{align*}
T_2(x_2) = \theta_{2}^{0} +\theta_{2}^{1} x_2
\end{align*}

Next, we use the least square method to define our cost functions $J(\theta)$. The cost functions are:
\begin{align*}
J(\theta_{1}^{0},\theta_{1}^{1}) = \frac{1}{2H}\sum_{h=1}^{H}(T_1(x_{1,h})-t_{1,h})^2
\end{align*}
\begin{align*}
J(\theta_{2}^{0},\theta_{2}^{1}) = \frac{1}{2H}\sum_{h=1}^{H}(T_2(x_{2,h})-t_{2,h})^2
\end{align*}

Let $\theta = (\theta^0_k, \theta^1_k)$, $x^{(i)} = (1, x_{k,i})$, and $t^{(i)}=t_{k,i}$, for $k=1$ or $2$. Then, we use gradient descent to iteratively optimize our $\theta$. The iterative formula of $\theta$ is as follows:
\begin{align*}
\theta:=\theta+\beta (t^{(i)}-T(x^{(i)}))x^{(i)}
\end{align*}
where $\beta$ is the learning rate.

Finally, we get the trained model and predict the time division points accordingly.

Note that, the result of linear regression determines the time division. Therefore, it does not impact on the differential privacy achieved, but it do impact on the data utility resulted.

\subsection{Privacy Analysis}
We analyze the privacy of the proposed schemes in this paper as follows.

\begin{theorem}\label{theorem1}
The direct perturbation mechanism preserves $\epsilon$-differential privacy.
\end{theorem}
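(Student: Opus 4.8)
The plan is to reduce Theorem~\ref{theorem1} to the Laplace mechanism (Theorem~\ref{the:laplace}) composed via the sequential composition theorem (Theorem~\ref{the:sequential}). The argument proceeds in three steps. First, I would establish the sensitivity claim: for the histogram query $D_i \mapsto D_i = (u_1^i,\dots,u_M^i)$, the global sensitivity under the neighboring relation (two datasets differing in exactly one user's trajectory) is $\Delta_H = 1$. This is because removing a single user's trajectory changes, at each fixed time stamp $t_i$, the count $u_m^i$ at exactly one base station (the one where that user is located at time $t_i$) by exactly $1$, so $\|D_i - D_i'\|_1 = 1$; hence $\Delta_H = 1$ as asserted in the paragraph following Alg.~\ref{alg:direct-perburbation}.

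Second, I would invoke Theorem~\ref{the:laplace} at each time stamp: in Alg.~\ref{alg:direct-perburbation} the noise added to $D_i$ is i.i.d.\ $Lap(b)$ with $b = S\Delta_H/\epsilon = S/\epsilon$. Since this is exactly $Lap(\Delta_H/(\epsilon/S))$, the per-time-stamp release $\widetilde{D}_i = D_i + n^i$ satisfies $(\epsilon/S)$-DP. Third, the full mechanism outputs $\widetilde D = (\widetilde D_1,\dots,\widetilde D_S)$, using fresh independent randomness at each step; by the sequential composition theorem (Theorem~\ref{the:sequential}) with $r = S$ and each $\epsilon_i = \epsilon/S$, the overall mechanism satisfies $\big(\sum_{i=1}^S \epsilon/S\big)$-DP $= \epsilon$-DP. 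Optionally I would remark that the consistency post-processing of Alg.~\ref{alg:consistency} preserves this guarantee by the post-processing theorem (Theorem~\ref{the:post-processing}), though that is not strictly part of the statement.

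I do not anticipate a genuine obstacle here — this is a routine composition argument. The only point requiring a little care is the sensitivity computation, and in particular being explicit that the relevant neighboring relation is ``differ in one user's whole trajectory'' (not ``one record at one time stamp''): one must check that even though a user's trajectory touches all $S$ time stamps, at each individual time stamp it perturbs the histogram by only $1$ in $L_1$, so the per-release sensitivity is still $1$ and the total noise scale $S/\epsilon$ is correctly calibrated to absorb the $S$-fold composition. A secondary subtlety worth stating is the independence of the noise vectors $n^1,\dots,n^S$, which is what licenses the use of Theorem~\ref{the:sequential} rather than a weaker bound.
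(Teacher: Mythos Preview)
Your proposal is correct and follows essentially the same route as the paper: per-time-stamp Laplace mechanism with budget $\epsilon/S$, then sequential composition, then post-processing. The paper's own proof is terser and does not spell out the sensitivity computation or the independence of the noise vectors, so your version is a more detailed rendering of the same argument.
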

\begin{proof}
The direct perturbation mechanism mainly applies the Laplace mechanism $S$ times sequentially, each of which with the privacy budget $\epsilon/S$, and then applies the consistency post-process. Therefore, it is straightforward that the mechanism achieves $\epsilon$-differential privacy due to the sequential composition and the post-processing property.
\end{proof}

\begin{theorem}\label{theorem1}
The threshold perturbation mechanism preserves $\epsilon$-differential privacy.
\end{theorem}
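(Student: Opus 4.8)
The plan is to view Algorithm~\ref{alg:threshold} as an instance of the sparse vector technique (the \emph{NumericSparse} mechanism whose analysis appears in \cite{dwork2014algorithmic}, here in the adapted form of \cite{Li2015Differentially}), composed at the end with the consistency post-processing of Algorithm~\ref{alg:consistency}. The privacy accounting splits along the two channels the algorithm explicitly budgets: the \emph{comparison channel}, consisting of the noisy thresholds $\widetilde{T}_{cnt}$ and the noisy distances $\widetilde{d}_i$ tested against them, charged to $\epsilon_1 = \alpha\epsilon$; and the \emph{release channel}, consisting of the histograms that are actually perturbed and emitted afresh (the initial $\widetilde{D}_1$, those perturbed inside the loop, and the final $\widetilde{D}_S$ of the leftover-budget branch), charged to $\epsilon_2 = (1-\alpha)\epsilon$. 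Since $\epsilon_1+\epsilon_2=\epsilon$, it suffices to show each channel respects its budget, to combine them correctly, and then to invoke post-processing for Algorithm~\ref{alg:consistency}.

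For the comparison channel, I would first use the cutoff test ``$cnt \ge c$'' to bound by $c$ the number of affirmative comparisons, and hence the number of freshly drawn thresholds $\widetilde{T}_{cnt}$; the run therefore decomposes into a bounded number of \emph{AboveThreshold} episodes, each comprising one noisy threshold $T + Lap(c\Delta_D/\epsilon_1)$ together with the noisy distances $\mathit{Dist}(\widetilde{D}_{i-1},D_i) + Lap(2c\Delta_D/\epsilon_1)$ tested against it until one crosses. Treating the already-released $\widetilde{D}_{i-1}$ as part of the output on which one conditions, each distance is a query of sensitivity $\Delta_D$ (which equals $2$ for our neighboring relation, as the text computes), so the noise scales $c\Delta_D/\epsilon_1$ and $2c\Delta_D/\epsilon_1$ are precisely the regime in which the sparse vector technique with at most $c$ affirmative answers is $\epsilon_1$-differentially private; sequential composition (Theorem~\ref{the:sequential}) then makes the whole comparison channel $\epsilon_1$-DP.

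For the release channel, the same cutoff bounds by $c$ the number of histograms freshly perturbed: $\widetilde{D}_1$, at most $c-1$ inside the loop, or else the final $\widetilde{D}_S$ in the leftover branch. Each fresh release is a Laplace mechanism (Theorem~\ref{the:laplace}) of scale $c\Delta_D/\epsilon_2$ (respectively $\Delta_D/\epsilon_l$), hence at most $(\epsilon_2/c)$-DP once the histogram sensitivity is accounted for, with $\epsilon_l$ being exactly the unused portion of $\epsilon_2$ so the accumulated budget of these releases never exceeds $\epsilon_2$; the ``$\widetilde{D}_i=\widetilde{D}_{i-1}$'' copies consume nothing, being post-processing of already-released values. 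Theorem~\ref{the:sequential} then gives $\epsilon_2$-DP for the release channel.

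The main, and most delicate, step is to combine the two channels: one cannot simply add their budgets by sequential composition, because a fresh histogram is emitted only at the time stamps where the comparison channel reported ``above threshold'', so the releases are correlated with the comparison outcomes. I would follow the \emph{NumericSparse} argument of \cite{dwork2014algorithmic}: fix the entire vector of comparison outcomes (which is $\epsilon_1$-DP by the above), observe that this vector deterministically determines the set of time stamps at which a new histogram is emitted, and then argue that, conditioned on that vector, the emitted noisy histograms are generated by fresh independent Laplace noise and thus constitute an $\epsilon_2$-DP mechanism; composing the two stages yields $(\epsilon_1+\epsilon_2)$-DP $= \epsilon$-DP for Algorithm~\ref{alg:threshold}. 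Finally, Algorithm~\ref{alg:consistency} only post-processes this output, so by the post-processing property (Theorem~\ref{the:post-processing}) the full mechanism remains $\epsilon$-DP. I expect the conditioning bookkeeping of this last step --- together with pinning down the distance sensitivity $\Delta_D$ and checking that $\epsilon_l$ keeps the release channel within $\epsilon_2$ --- to be the only non-routine part; the rest is direct application of the Laplace mechanism and the composition theorems already established.
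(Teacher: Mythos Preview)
Your proposal is correct and follows essentially the same approach as the paper: split the budget into the comparison channel $\epsilon_1$ (sparse vector technique) and the release channel $\epsilon_2$ (at most $c$ Laplace releases, with $\epsilon_l$ absorbing the remainder), then sum to $\epsilon$. Your treatment is in fact considerably more careful than the paper's own proof, which simply counts budget without addressing the conditioning of releases on comparison outcomes or the role of the already-published $\widetilde{D}_{i-1}$ in the distance queries.
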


\begin{proof}
The threshold perturbation mechanism applies the sparse vector technique and allocates the privacy budget $\epsilon$ in two aspects. The first is comparing with the threshold, which consumes the budget $\epsilon_1/c$ at most $c$ times. The second is publishing the histograms with distances greater than the threshold, with the budget $\epsilon_2/c$ also at most $c$ times. If there are exactly $c$ times, then the total privacy budget used is $\epsilon_1/c \times c + \epsilon_2/c \times c = \epsilon$. Otherwise, according to Lines~\ref{line:ifi} to \ref{line:endifi} of Alg.~\ref{alg:threshold}, the total privacy budget used is also $\epsilon$. Therefore, the mechanism preserves $\epsilon$-differential privacy.
\end{proof}

\begin{theorem}\label{theorem1}
The static hybrid perturbation mechanism preserves $\epsilon$-differential privacy.
\end{theorem}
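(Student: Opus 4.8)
The plan is to break the static hybrid perturbation mechanism (Alg.~\ref{alg:shybrid}) into three stages whose privacy losses add, by sequential composition, up to $\epsilon_s+\epsilon_d+\epsilon_t=\epsilon$, the split of the budget made in Line~\ref{line:budget}. The three stages are: (i) the differentially private time division carried out by the exponential mechanism in Lines~\ref{line:computeU}--\ref{line:select}, which outputs the pair $(\hat t_1,\hat t_2)$; (ii) the call to the direct perturbation mechanism on the daytime block $\{D_i\}_{[\hat t_1,\hat t_2]}$ with budget $\epsilon_d$ (Line~\ref{line:direct}); and (iii) the call to the threshold perturbation mechanism on the nighttime block $\{D_i\}_{[1,S]\setminus[\hat t_1,\hat t_2]}$ with budget $\epsilon_t$ (Line~\ref{line:threshold}). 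Throughout I use the paper's neighboring relation: $D$ and $D'$ differ in the entire trajectory of a single user.

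For Stage~(i) I would note that Lines~\ref{line:computeP}--\ref{line:select} are exactly the exponential mechanism instantiated with the utility function $U$ and privacy parameter $\epsilon_s$: the probability of selecting $(i,j)$ is proportional to $\exp(\epsilon_s U(i,j)/(2\Delta))$, with $\Delta$ the global sensitivity of $U$. Using the bound $\Delta=2\log_\alpha S$ established earlier together with the exponential-mechanism theorem, $(\hat t_1,\hat t_2)$ is the output of an $\epsilon_s$-DP computation on $D$. That $U$ depends on $D$ only through the $L_1$ distances of adjacent histograms changes nothing, since the exponential-mechanism guarantee already allows arbitrary dependence on $D$.

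For Stages~(ii) and~(iii) I would fix the division points $(\hat t_1,\hat t_2)$ to an arbitrary value and treat the rest as a randomized function of $D$ only. The extraction $D\mapsto\{D_i\}_{[\hat t_1,\hat t_2]}$ is deterministic, and when $D,D'$ are neighboring their daytime blocks are neighboring in precisely the sense used by Alg.~\ref{alg:direct-perburbation} (each histogram still has sensitivity $1$); hence, by the privacy guarantee of the direct perturbation mechanism (with $\epsilon$ replaced by $\epsilon_d$ and $S$ by $\hat t_2-\hat t_1+1$) composed with this deterministic pre-processing, Stage~(ii) is $\epsilon_d$-DP with respect to $D$, and symmetrically Stage~(iii) is $\epsilon_t$-DP by the privacy guarantee of the threshold perturbation mechanism; the consistency post-processing inside each call is free by Theorem~\ref{the:post-processing}. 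Because the removed trajectory touches histograms at \emph{every} time stamp, the daytime and nighttime blocks are \emph{not} aggregates over disjoint sets of users, so parallel composition does not apply; instead, since the two calls use independent noise, sequential composition (Theorem~\ref{the:sequential}) shows that Stages~(ii) and~(iii) together are $(\epsilon_d+\epsilon_t)$-DP for every fixed value of $(\hat t_1,\hat t_2)$.

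The remaining step, gluing Stage~(i) to Stages~(ii)--(iii), is the one I expect to be the main obstacle: the partition that drives Stages~(ii)--(iii) is itself the output of Stage~(i), so this is an adaptive composition, not the static form stated in Theorem~\ref{the:sequential}. I would close with a direct conditioning argument: write a full output as $(v,o)$ with $v=(\hat t_1,\hat t_2)$, factor $\Pr[\mathcal{M}(D)=(v,o)]=P_1(D,v)\,P_2(D,v,o)$ where $P_1$ is the probability Stage~(i) returns $v$ and $P_2$ the probability Stages~(ii)--(iii) return $o$ given the division $v$, bound $P_1(D,v)\le e^{\epsilon_s}P_1(D',v)$ (Stage~(i) is $\epsilon_s$-DP) and $P_2(D,v,o)\le e^{\epsilon_d+\epsilon_t}P_2(D',v,o)$ (for the fixed $v$), then multiply and sum over $o$ to obtain the $\epsilon$-DP inequality with $\epsilon=\epsilon_s+\epsilon_d+\epsilon_t$. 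Besides this adaptivity point, the subtlety worth stating explicitly in the final write-up is why the daytime/nighttime split costs $\epsilon_d+\epsilon_t$ rather than $\max(\epsilon_d,\epsilon_t)$ -- namely that under trajectory-level neighboring the two blocks are built from the same users.
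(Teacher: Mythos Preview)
Your proposal is correct and follows essentially the same approach as the paper: decompose the mechanism into the exponential-mechanism time division ($\epsilon_s$), the direct perturbation on the daytime block ($\epsilon_d$), and the threshold perturbation on the nighttime block ($\epsilon_t$), then invoke sequential composition to reach $\epsilon_s+\epsilon_d+\epsilon_t=\epsilon$. Your write-up is in fact more careful than the paper's, which simply asserts sequential composition without discussing the adaptivity of the partition or why parallel composition is inapplicable to the two perturbation calls.
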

\begin{proof}
The static hybrid perturbation mechanism includes three parts, the time division, the direction perturbation and the threshold perturbation. The first part is in fact an exponential mechanism, and achieves $\epsilon_s$-differential privacy. The differential privacy of the other two parts has been analyzed above, thus the two parts achieve $\epsilon_d$- and $\epsilon_t$- differential privacy, respectively. Since the three parts are sequentially composed, and $\epsilon_s+ \epsilon_d + \epsilon_t = \epsilon$, the mechanism achieves $\epsilon$-differential privacy.
\end{proof}

\begin{theorem}\label{theorem1}
The dynamic hybrid perturbation mechanism preserves $\epsilon$-differential privacy.
\end{theorem}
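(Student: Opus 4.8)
The plan is to decompose the dynamic hybrid perturbation mechanism into its constituent differentially private subroutines and then invoke the composition theorems already established. Observing Alg.~\ref{alg:dhybrid}, the mechanism consists of two logically separate phases: (i) the learning phase, in which the time divisions of the $H$ historical datasets $\bar{D}_1,\ldots,\bar{D}_H$ are computed in a differentially private way and a linear regression model is fit to the resulting noisy division points; and (ii) the publication phase, in which the current dataset $D$ is partitioned into $[1,\hat{t}_1-1]$, $[\hat{t}_1,\hat{t}_2]$ and $[\hat{t}_2+1,S]$, and these three pieces are released by ThresholdPerturbation with budget $\epsilon_{t1}$, DirectPerturbation with budget $\epsilon_{d}$, and ThresholdPerturbation with budget $\epsilon_{t2}$, respectively. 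First I would argue that the learning phase, as far as the current dataset $D$ is concerned, leaks nothing: the historical data are assumed independent of $D$ (as stated in the text following Alg.~\ref{alg:dhybrid}), so running an $\epsilon$-DP time-division on each $\bar{D}_h$ does not touch $D$ at all, and by the post-processing property (Theorem~\ref{the:post-processing}) the linear regression that turns the noisy historical division points into predicted points $\hat{t}_1,\hat{t}_2$ introduces no further privacy loss with respect to $D$. Hence the predicted division points may be treated as public, data-independent parameters when analyzing the release of $D$.

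Next I would handle the publication phase. The deterministic partition of $[1,S]$ into the three intervals $[1,\hat{t}_1-1]$, $[\hat{t}_1,\hat{t}_2]$, $[\hat{t}_2+1,S]$ is, conditioned on the (now public) $\hat{t}_1,\hat{t}_2$, a fixed partitioning function in the sense of Theorem~\ref{the:parallel}; since any single user's trajectory contributes to $D$ across all time stamps, one must be slightly careful, but the standard argument is that a neighboring pair $D,D'$ differing in one trajectory differ in each $D_i$, and the three sub-releases operate on disjoint time ranges, so the overall mechanism is the independent composition of three mechanisms whose inputs are the three disjoint slices. By Theorems~\ref{theorem1} (for threshold perturbation) and the direct-perturbation analysis, the three calls achieve $\epsilon_{t1}$-, $\epsilon_{d}$-, and $\epsilon_{t2}$-DP respectively on their slices. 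Applying either parallel composition (Theorem~\ref{the:parallel}), which on disjoint partitions would give $\max\{\epsilon_{t1},\epsilon_{d},\epsilon_{t2}\}$, or — to be safe, since the slices are not an arbitrary partition of the \emph{users} but of the time stamps, and each user appears in every slice — sequential composition (Theorem~\ref{the:sequential}), which gives $\epsilon_{t1}+\epsilon_{d}+\epsilon_{t2}$. Since Line~\ref{line:budget2} sets $\epsilon_{d}+\epsilon_{t1}+\epsilon_{t2}=\epsilon$, the publication phase is $\epsilon$-DP; folding in the data-independent learning phase via post-processing, the whole mechanism is $\epsilon$-DP, and one last invocation of the post-processing property covers the consistency post-processing inherited from the sub-routines.

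The main obstacle — and the step I would be most careful about — is the interaction between the notion of neighboring datasets (differing in one full user trajectory, which spans all time stamps) and the decomposition into time-slice sub-mechanisms: because a critical user affects every slice simultaneously, one cannot simply appeal to parallel composition over the time partition as if the slices were independent databases. The honest resolution is to treat the three sub-releases as a \emph{sequential} composition over the common database $D$ (each sub-mechanism being $\epsilon_{t1}$-, $\epsilon_{d}$-, $\epsilon_{t2}$-DP as a function of all of $D$, since restricting to a sub-range of time stamps is itself post-processing of $D$), so that Theorem~\ref{the:sequential} yields $\epsilon_{t1}+\epsilon_{d}+\epsilon_{t2}=\epsilon$. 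A secondary subtlety is justifying rigorously that the linear-regression output, being a deterministic function of the $\epsilon$-DP historical division points and independent of $D$, does not need any budget charged against $D$; this is exactly where the independence assumption between historical and current data, explicitly stated in the paper, does the work, and I would state it as such.
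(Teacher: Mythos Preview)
Your proposal is correct and follows essentially the same decomposition as the paper: the historical time-division phase is treated as independent of the current dataset $D$ (the paper phrases this as parallel composition across the disjoint days, you as post-processing of data disjoint from $D$), and the three perturbations on the current day are combined via sequential composition with $\epsilon_{t1}+\epsilon_d+\epsilon_{t2}=\epsilon$. Your explicit justification for using sequential rather than parallel composition across the time slices---because a single user's trajectory spans all time stamps---is in fact more careful than the paper's own proof, which simply asserts sequential composition without discussing this point.
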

\begin{proof}
The dynamic hybrid perturbation mechanism includes two parts, time division and data perturbation. The first part is the time divisions for the past $H$ day based on the historical data. We have assume that the data of each day are independent of each other, thus this part is the parallel composition of $H$ time divisions of data, each with $\epsilon$-differential privacy, and thus totally also achieves $\epsilon$-differential privacy. The differential privacy of the second part composing three perturbations has been analyzed above, and they achieve $\epsilon_{t1}$-, $\epsilon_d$- and $\epsilon_{t2}$- differential privacy, respectively. Since the three perturbations are sequentially composed, and $\epsilon_{t1}+ \epsilon_d + \epsilon_{t2} = \epsilon$, the second part achieves $\epsilon$-differential privacy. Finally, the two parts are parallel composed, and thus the mechanism achieves $\epsilon$-differential privacy.
\end{proof}

\section{Experiments}\label{sec:exeriments}

We fully implement the four proposed schemes, direct perturbation (dir-P), threshold perturbation (th-P), static hybrid perturbation (sta-HP) and dynamic hybrid perturbation (dy-HP), and carry out experiments over two datasets to compare their performances. In default, we set $\alpha=12$. We also use the attack proposed in \cite{Xu2017Trajectory} to perform trajectory recovery on perturbed aggregated data, and compare the accuracies of trajectories recovered. We implement the four schemes in MATLAB. All the experiments are performed on a machine with Intel(R) Core(TM) i5-6500 CPU 3.20 GHZ and 8GB RAM, running Windows 7.

\subsection{Datasets}

We conduct our experiment with two datasets. The first dataset is a real dataset about Taxi-Drive trajectories \cite{Jing2011Driving,Jing2010T}. The dataset contains the GPS trajectories of 10,357 taxis in Beijing from February 2 to February 8, 2008. The total number of points in the dataset is about 15 million, and the total distance of the trajectory reaches 9 million kilometers. The average sampling interval is approximately 177 seconds and the distance is approximately 623 meters. Each file in the data set (named by the taxi ID) contains the trajectory of a taxi. Each record in this dataset contains 4 attributes, ID, time, longitude, and  latitude. We divided this region by a grid (500m$\times$500m), which constitutes a small grid area of about 8000, and counts the number of taxis per grid every 30 minutes to form aggregate data.

For the second dataset, we simulate and generate a trajectory dataset for urban crowd movements. According to the literature \cite{Cao2017Spatio,Chen2018The}, the spatial and temporal characteristics of urban population group activities are as follows. The number of a user's stays in a day is very limited, 97.7$\%$ of users do not have more than 4 stays, and the average number of stays per person per day is about 2.1 (maybe home and workplace). At different time periods, there is a difference in the time durations users stays, and the stay time durations during the night time are significantly longer than those during the day. And the rule of crowd movement revealed in the literature \cite{Xu2017Trajectory}: Each user's mobile mode has continuity, regularity and uniqueness. We use the Beijing urban area as the geographical location to simulate the generation of about 8000 communication base station locations (500 meters $\times$ 500 meters grid). We generate the initial distribution of 130,000 people based on the characteristics of population distribution. Then, According to the distribution characteristics and user movement patterns in different time periods, the track points of each user in continuous time are generated. Similarly, each record contains four attributes, ID, time, longitude, and latitude. We select the interval time of 1 hour to aggregate the user trajectory data for 19 hours. The usability of the simulated data can be verified by the attack method in \cite{Xu2017Trajectory} later.

\subsection{Attack Evaluation}

We randomly select 1000 user records from the taxi-drive dataset and the synthetic dataset, and generate two smaller aggregated mobility datasets. Then we use the attack method proposed in \cite{Xu2017Trajectory} to perform trajectory recovery experiments on these two datasets. Fig.~\ref{fig:recovery}(a) shows the trajectory recovery on the synthetic dataset. When there are 10 points on a trajectory, the accuracy of the trajectory recovery can reach 78.86$\%$. As the number of trajectory points increases, the accuracy decreases, but the average accuracy can still reach 69.51$\%$. This result is basically consistent with the experimental conclusions in \cite{Xu2017Trajectory}, and this also proves that the user trajectory dataset generated by our simulation is feasible in this scheme. Fig.~\ref{fig:recovery}(b) shows the trajectory recovery on the real taxi dataset. When the trajectory of each taxi has 5 trajectory points, the recovery accuracy can reach 59.4$\%$. The reason why the trajectory recovery accuracy is lower than that of the synthetic dataset is that the moving characteristics of taxis lack specific rules like mobile users. Taxis often run everywhere randomly, while the moving of mobile users are normally of regularity and uniqueness characteristics, which are also the basis of the attack method.
\begin{figure}[t]
  \centering
  \subfigure[Synthetic Dataset]{
    \includegraphics[width=1.5in]{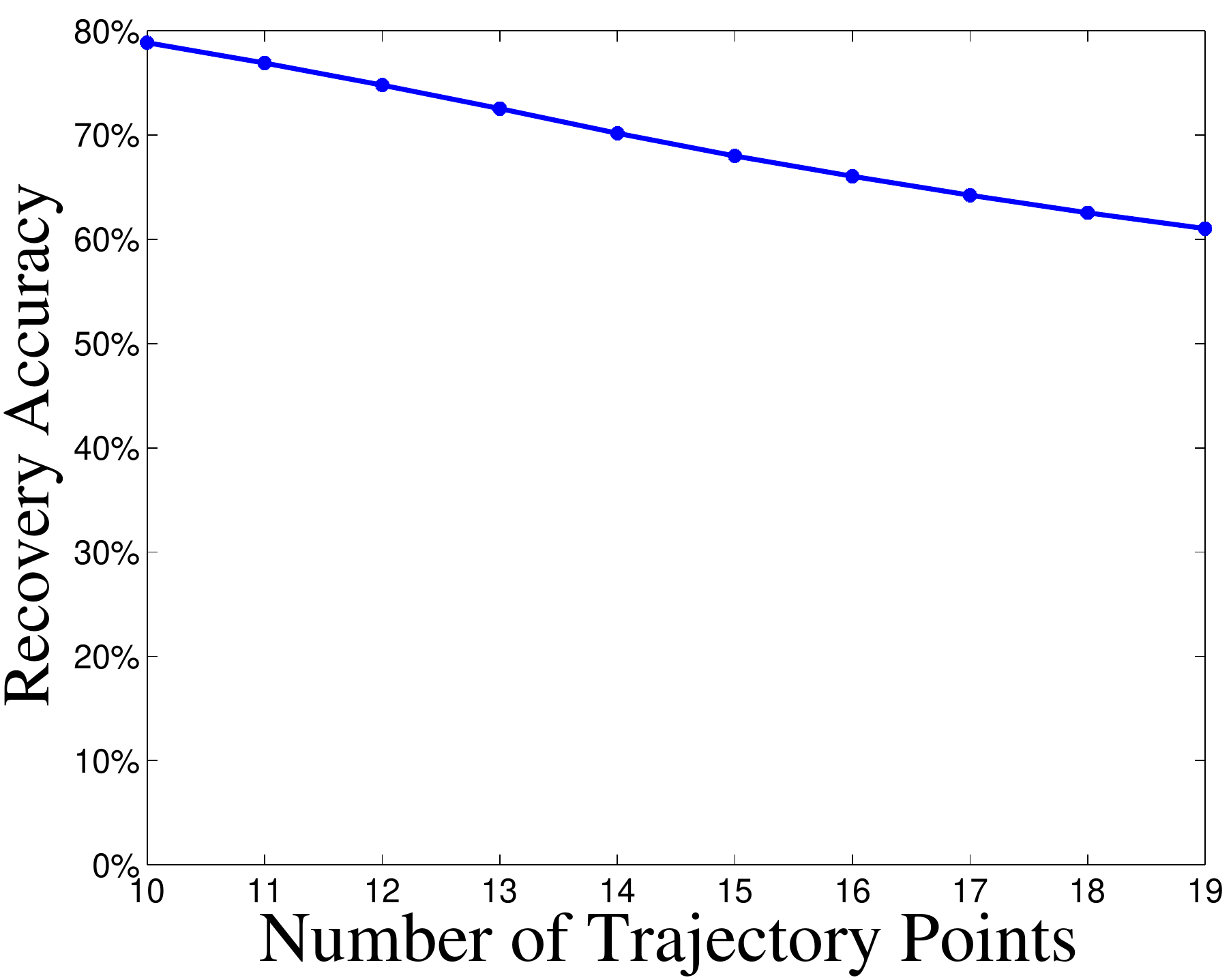}
  }
  \subfigure[Taxi Dataset]{
    \includegraphics[width=1.5in]{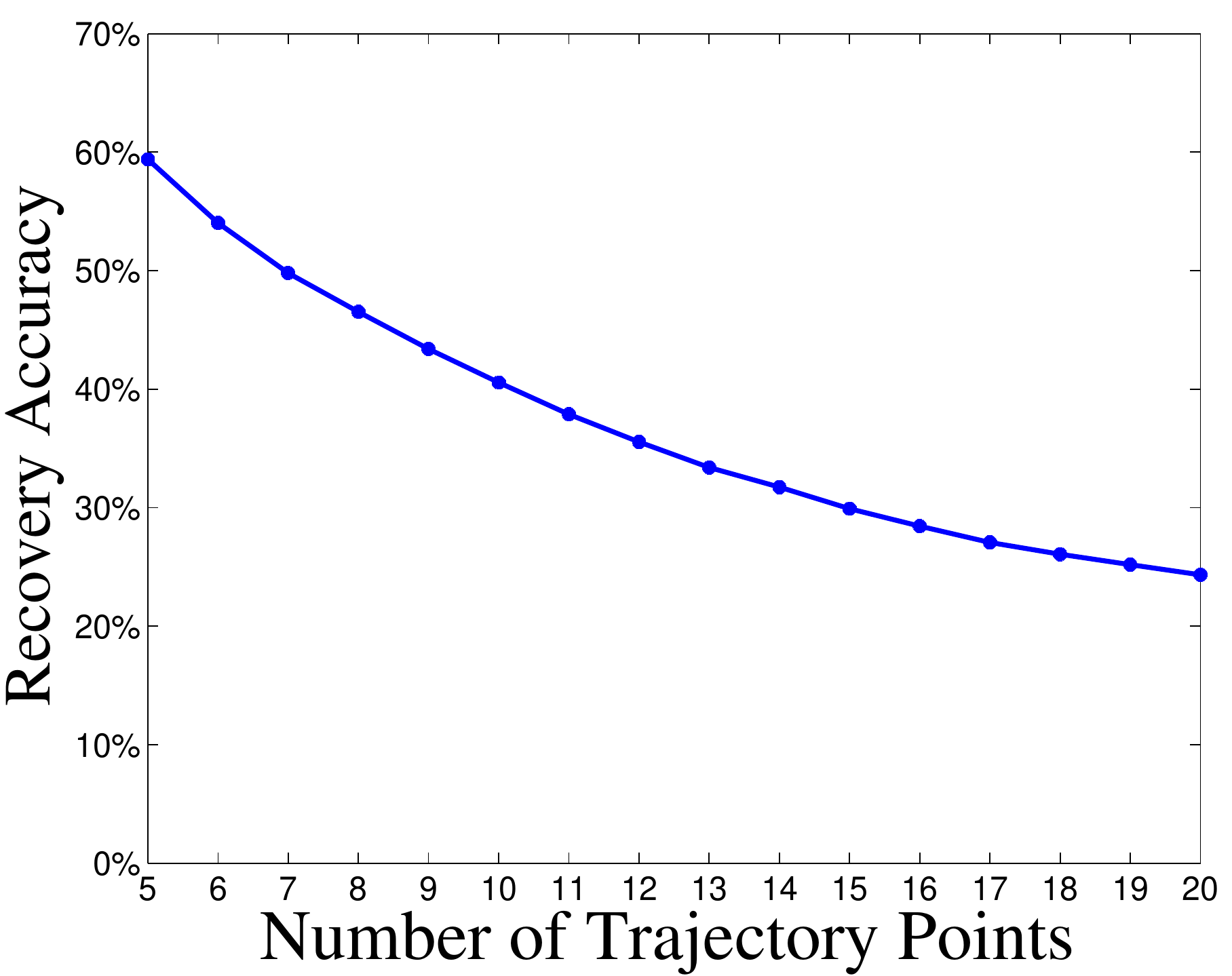}
  }
  \caption{Trajectory recovery on two dataset}\label{fig:recovery}
\end{figure}

We select the synthetic data with 19 trajectory points (whose recovery accuracy is 61.02$\%$ without DP) and the taxi data with 10 trajectory points (whose recovery accuracy is 40.56$\%$ without DP), and compare the performances of trajectory recovery with various privacy budgets under the proposed schemes. As shown in Fig.~\ref{fig:recoveryDP}, the proposed schemes have the same effect on protecting the aggregated data from being accurately restored to user trajectories. Especially on the synthetic dataset, the accuracy of trajectory recovery has dropped from 61.2$\%$ to between 10$\%$ and 20$\%$. Besides, the recovery accuracy has improved with the increase of the privacy budget, but the effect is not obvious.
\begin{figure}[t]
  \centering
  \subfigure[Synthetic Dataset]{
    \includegraphics[width=1.5in]{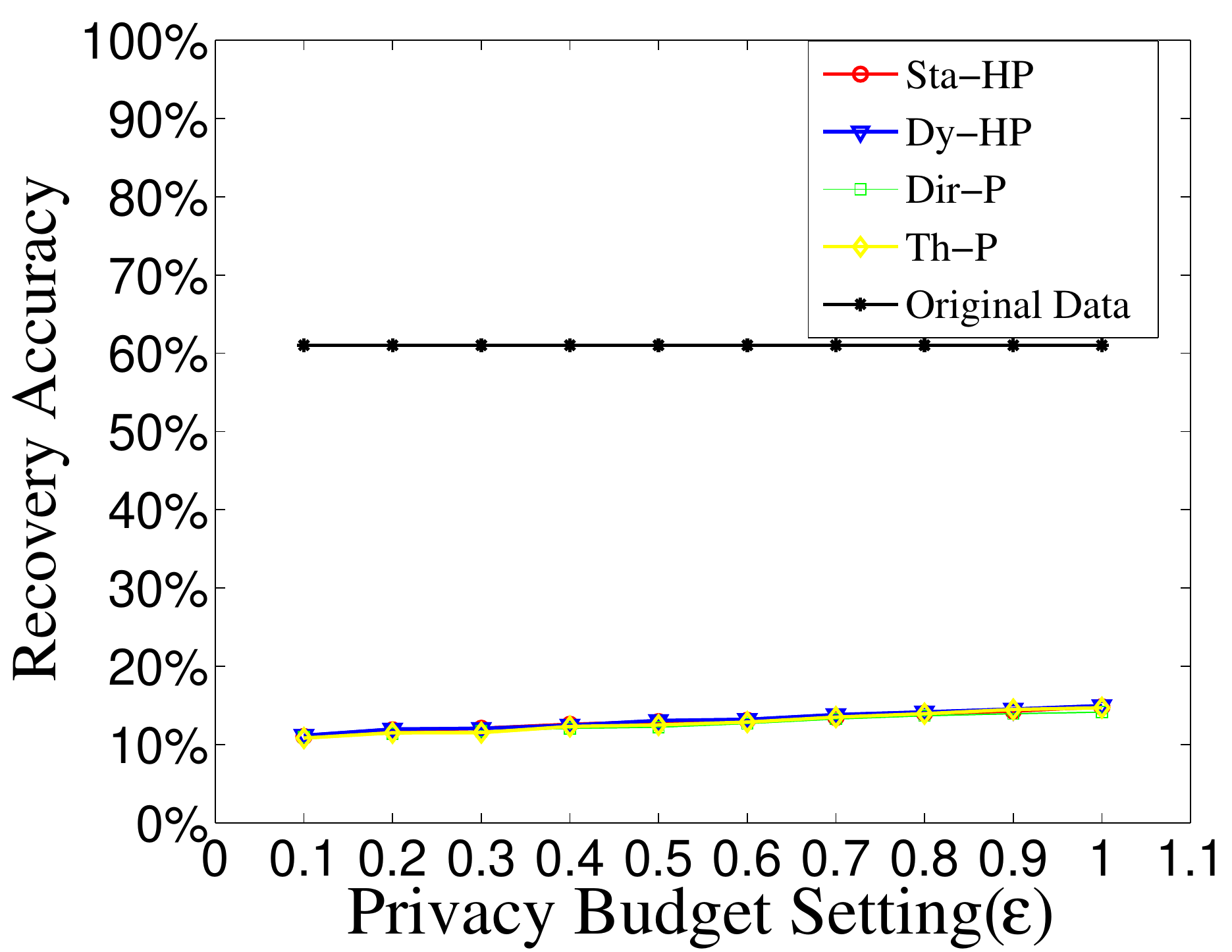}
  }
  \subfigure[Taxi Dataset]{
    \includegraphics[width=1.5in]{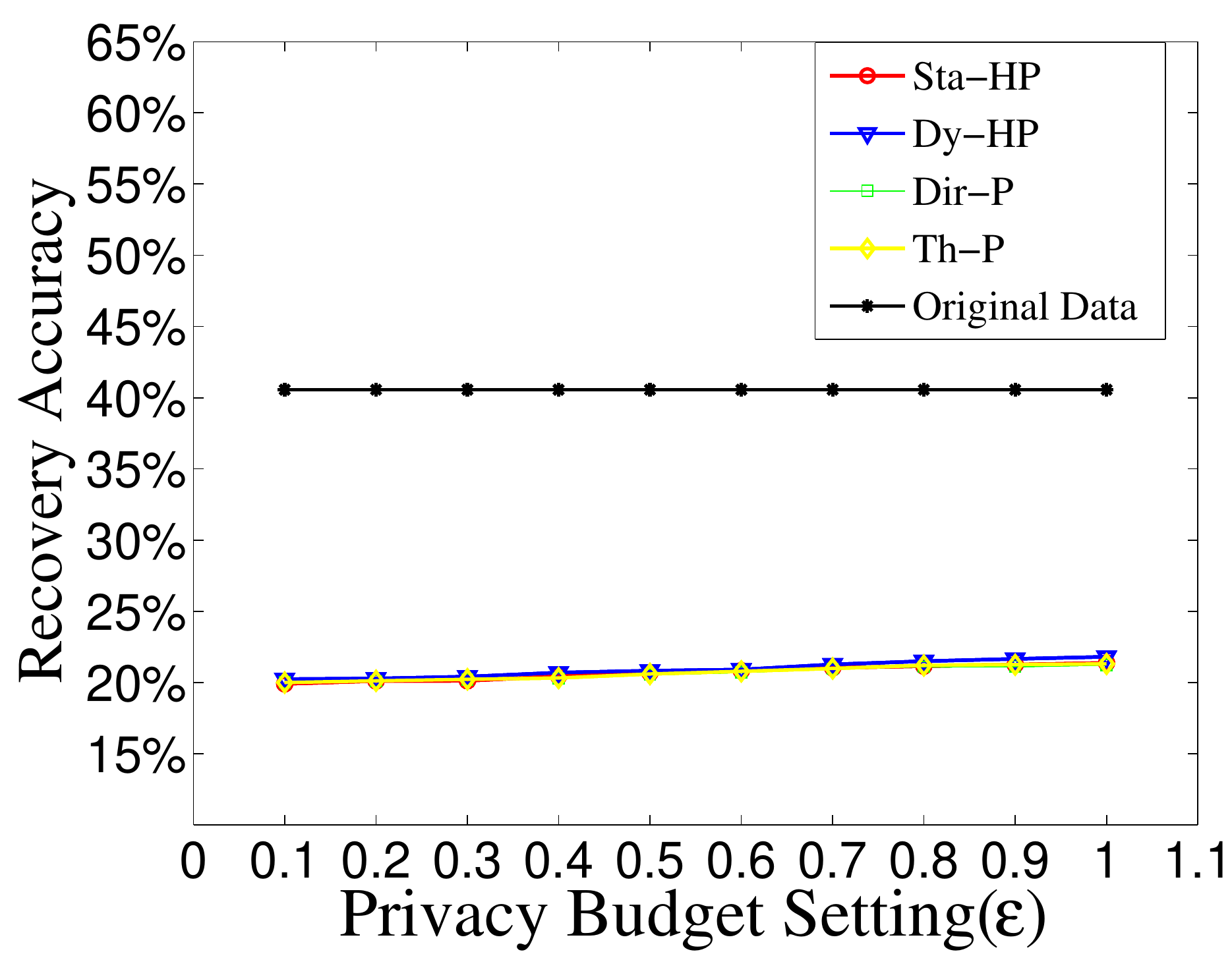}
  }
  \caption{Trajectory recovery after DP on two dataset}\label{fig:recoveryDP}
\end{figure}

\subsection{Performance Evaluation}

We use Mean Absolute Error(MAE) and Mean Relative Error (MRE) as the utility metric to evaluate the performance of the proposed schemes. Let $X = \{x_{1},\cdots,x_{t},\cdots\}$ denote the raw time series and $Y = \{y_{1},\cdots, y_{t},\cdots\}$ denote the noisy time series. In addition, let $x_{t} = \{r_{t,1},...,r_{t,M}\}$ and $y_{t} = \{r'_{t,1},...,r'_{t,M}\}$ denote raw aggregated data and noisy aggregated data at time stamp $t$, respectively. The MAE and MRE for this aggregated data at time stamp $t$ are:
\begin{equation}\label{theorem}
  \textbf{MAE}(x_{t},y_{t}) = \frac{1}{M}\sum_{i=1}^{M}|r_{t,i} - r'_{t,i}|
\end{equation}
\begin{equation}\label{theorem}
  \textbf{MRE}(x_{t},y_{t}) = \frac{1}{M}\sum_{i=1}^{M}\frac{|r_{t,i} - r'_{t,i}|}{\max(\gamma,r'_{i})}
\end{equation}
For the bound $\gamma$, we set its value to 0.001 to avoid the possibility that the denominator is 0. In our experiments, we first calculate the MAE and MRE for each aggregated data at each time stamp and then work out the average of all aggregated data as the final results.

\textbf{Utility vs Threshold T.} As shown in Fig.~\ref{fig:synthetic-threshold} and Fig.~\ref{fig:taxi-threshold}, we set $\epsilon$ to 0.8 and  compare the utility of four schemes when the threshold value varies. We let $\bar{T}$ denote the average of $L_1$ distances for adjacent time stamps of the current day. We can see that the data utility for direction perturbation keeps exactly the same, and that for the threshold perturbation decreases obviously with the decrease of the threshold. This is because the direction perturbation is dependent on the threshold, and for threshold perturbation, when the threshold becomes smaller, more histograms need to be perturbed with Laplace mechanism, then the privacy budget for each histogram perturbed is smaller, and finally the noisy added grows higher. Note that threshold perturbation consume a part of privacy budget for comparing with the threshold, and the privacy budget for publication is smaller then that of direct perturbation. We can also see that the data utility for the improved schemes keep roughly the same. The reason is that when the threshold becomes small, for the improved schemes, only the publication of nighttime part can be affected, since it uses threshold perturbation. While the nighttime part is suitable for threshold perturbation, and thus the impact on data utility is limited.

\begin{figure}
  \centering
  \subfigure[MAE]{
    \includegraphics[width=1.5in]{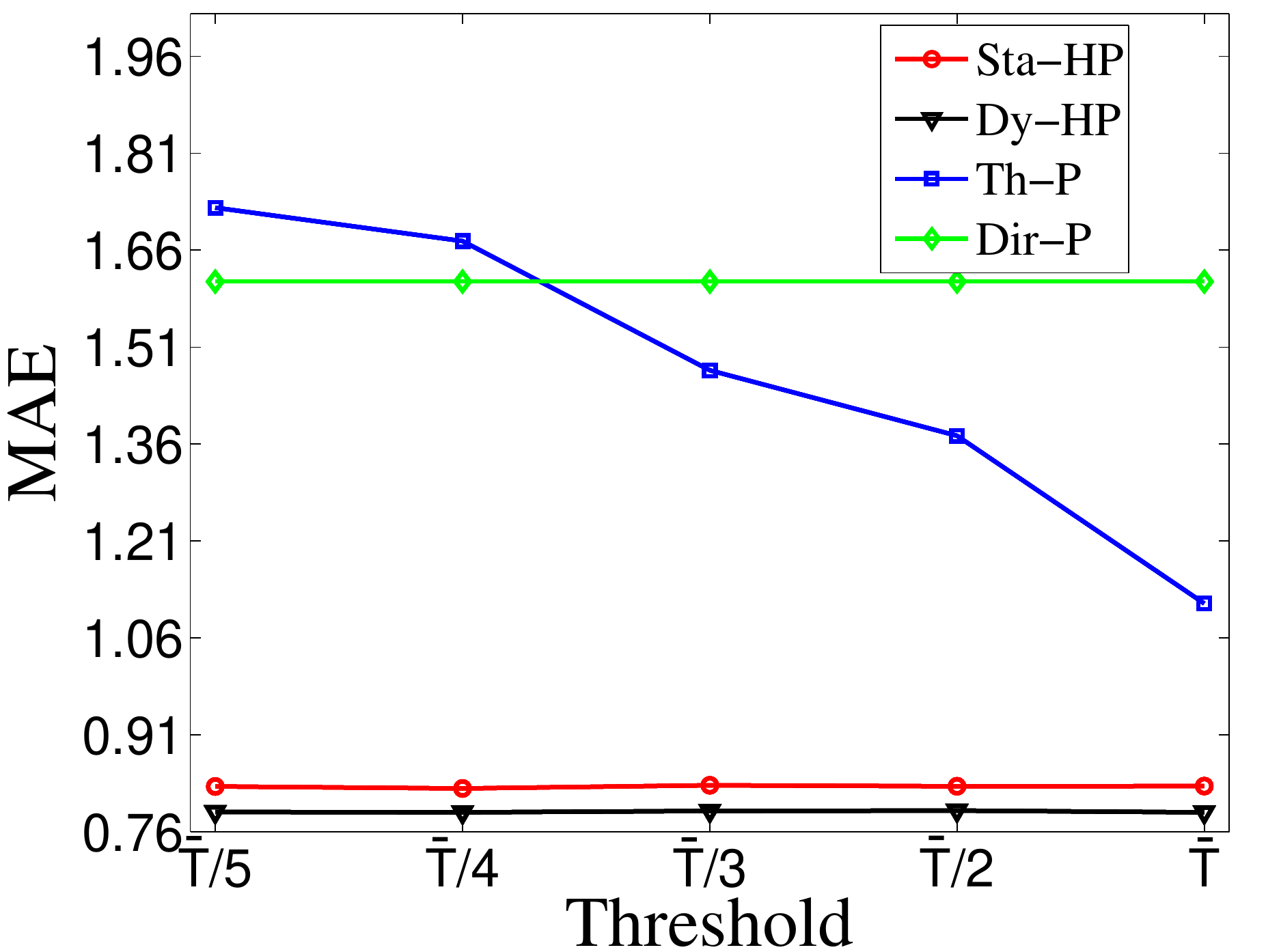}
  }
  \subfigure[MRE]{
    \includegraphics[width=1.5in]{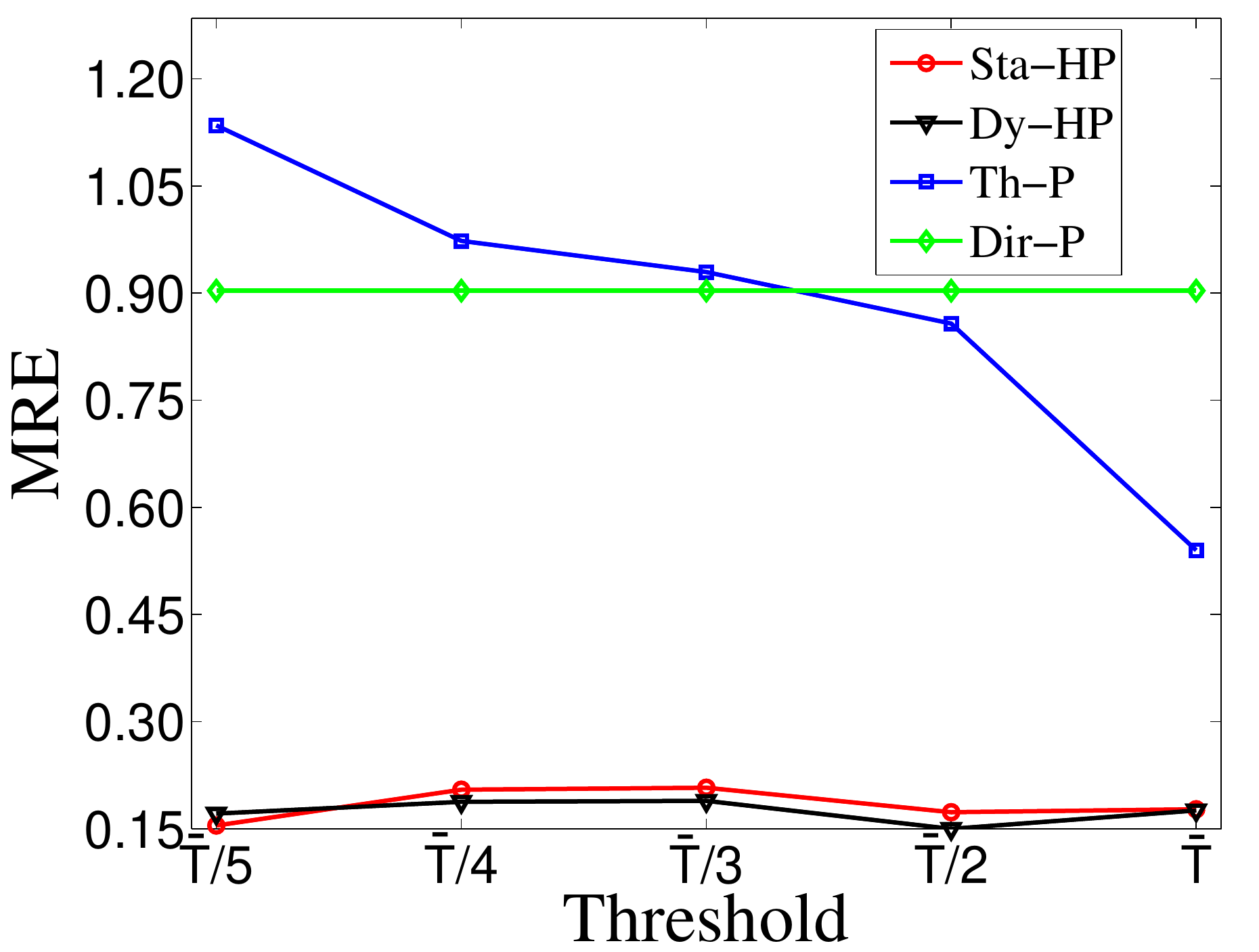}
  }
  \caption{Utility vs  $T$  on synthetic dataset}\label{fig:synthetic-threshold}
\end{figure}
\begin{figure}
  \centering
  \subfigure[MAE]{
    \includegraphics[width=1.5in]{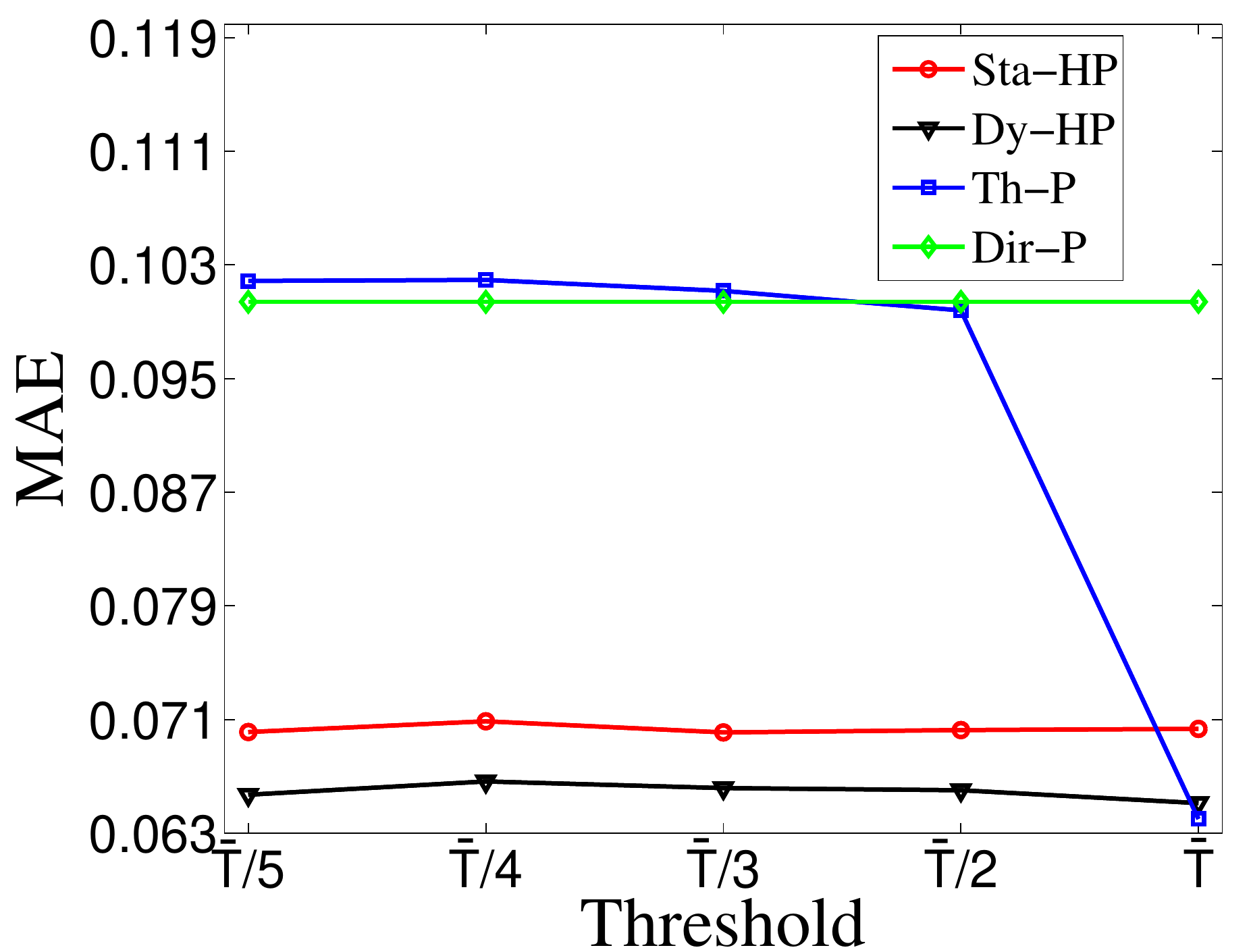}
  }
  \subfigure[MRE]{
    \includegraphics[width=1.5in]{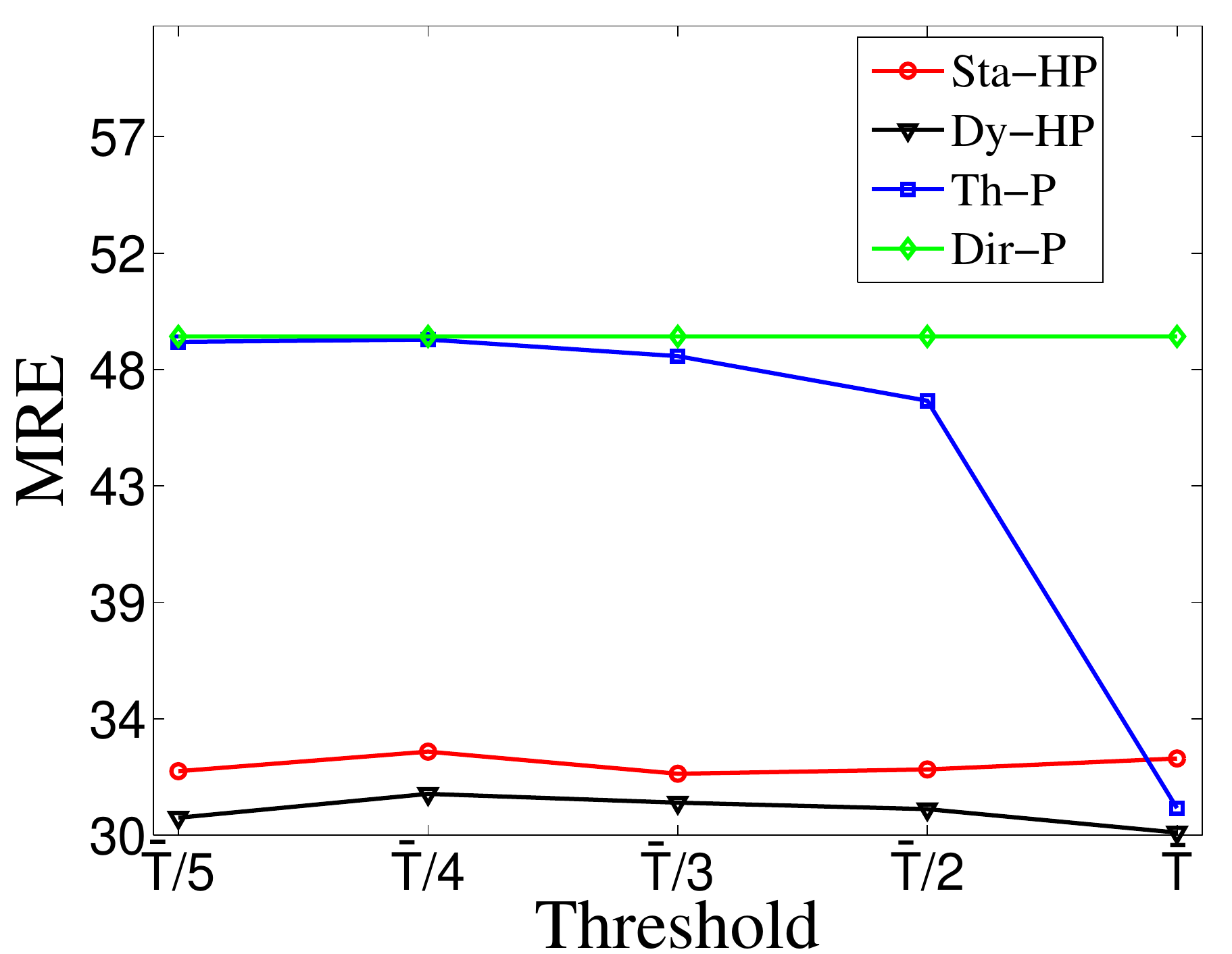}
  }
  \caption{Utility vs  $T$  on taxi dataset}\label{fig:taxi-threshold}
\end{figure}

\textbf{Utility vs Privacy.} As shown in Fig.~\ref{fig:synthetic} and Fig.~\ref{fig:taxi}, we compare MAE and MRE values of the proposed schemes when the threshold is set $\bar{T}/4$, and the privacy budget $\epsilon$ changes from 0.1 to 1 on two datasets. Here, $\bar{T}$ is defined the average of distances of the current day. We can observe that with the gradual increase of the privacy budget, the MAE and MRE values of all schemes are gradually reduced. This accords with the concept of differential privacy well. As the privacy budget increases, the amount of noise injected into data decreases, and thus the utility of data rises. On both synthetic and texi datasets, we can see that the improved schemes are significantly better than the basic schemes. The main reason is that the improved schemes can switch between different perturbation methods according to different moving characteristics, and always choose the most suitable perturbation method, while the basic schemes use only one perturbation method throughout the publication, and the perturbation cannot always suit the moving characteristics.

\begin{figure}
  \centering
  \subfigure[MAE]{
    \includegraphics[width=1.5in]{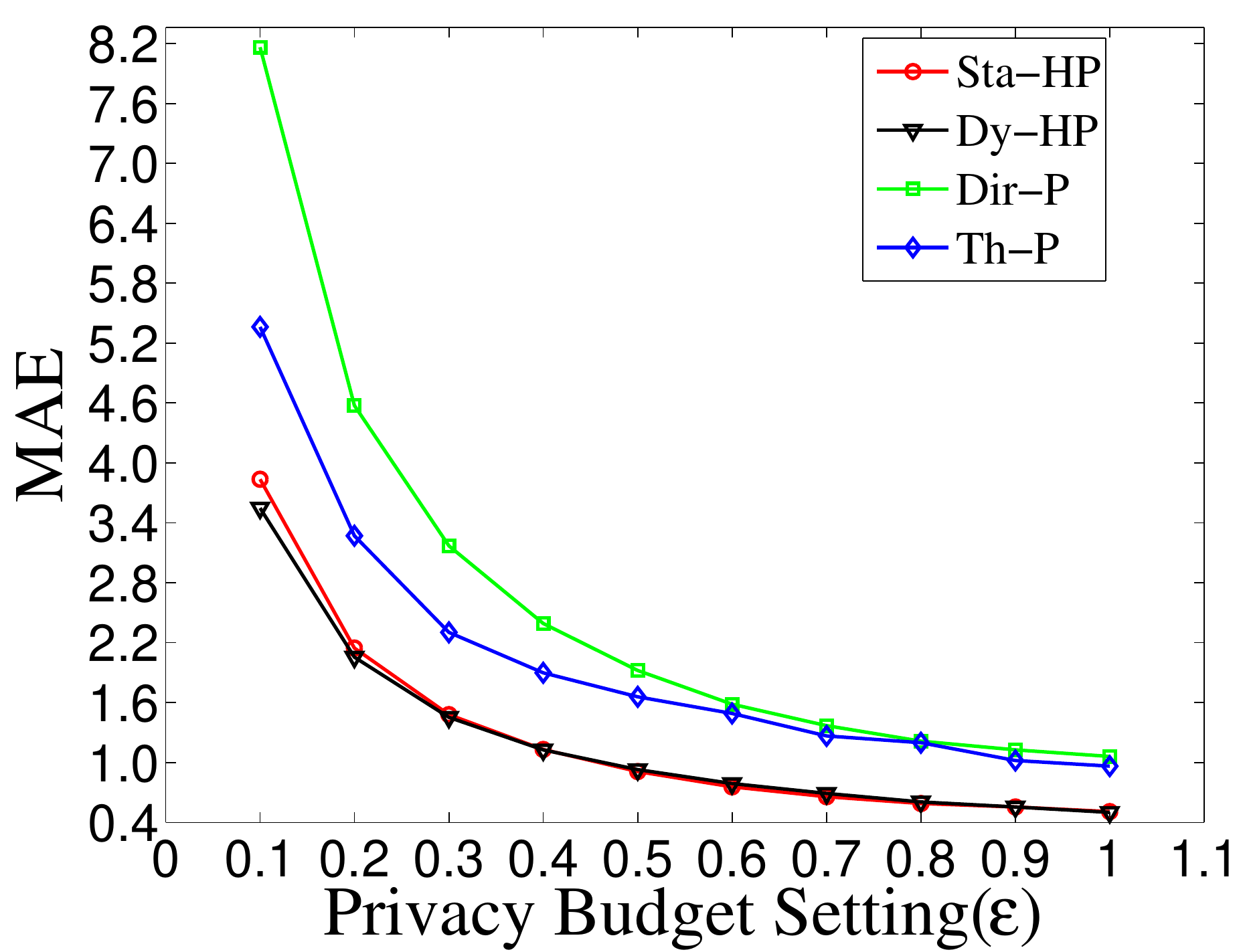}
  }
  \subfigure[MRE]{
    \includegraphics[width=1.5in]{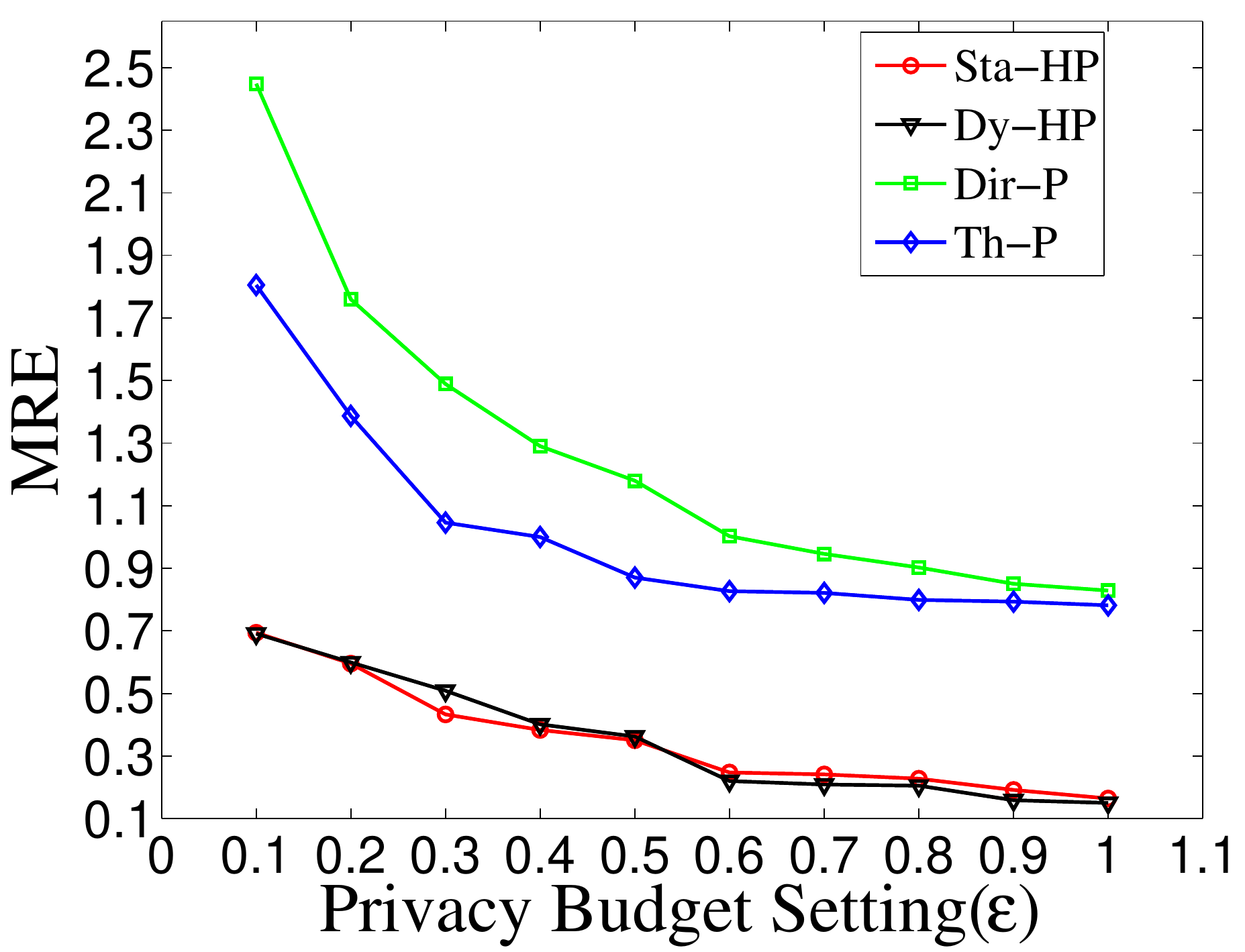}
  }
  \caption{Utility vs  $\epsilon$  on synthetic dataset}\label{fig:synthetic}
\end{figure}
\begin{figure}
  \centering
  \subfigure[MAE]{
    \includegraphics[width=1.5in]{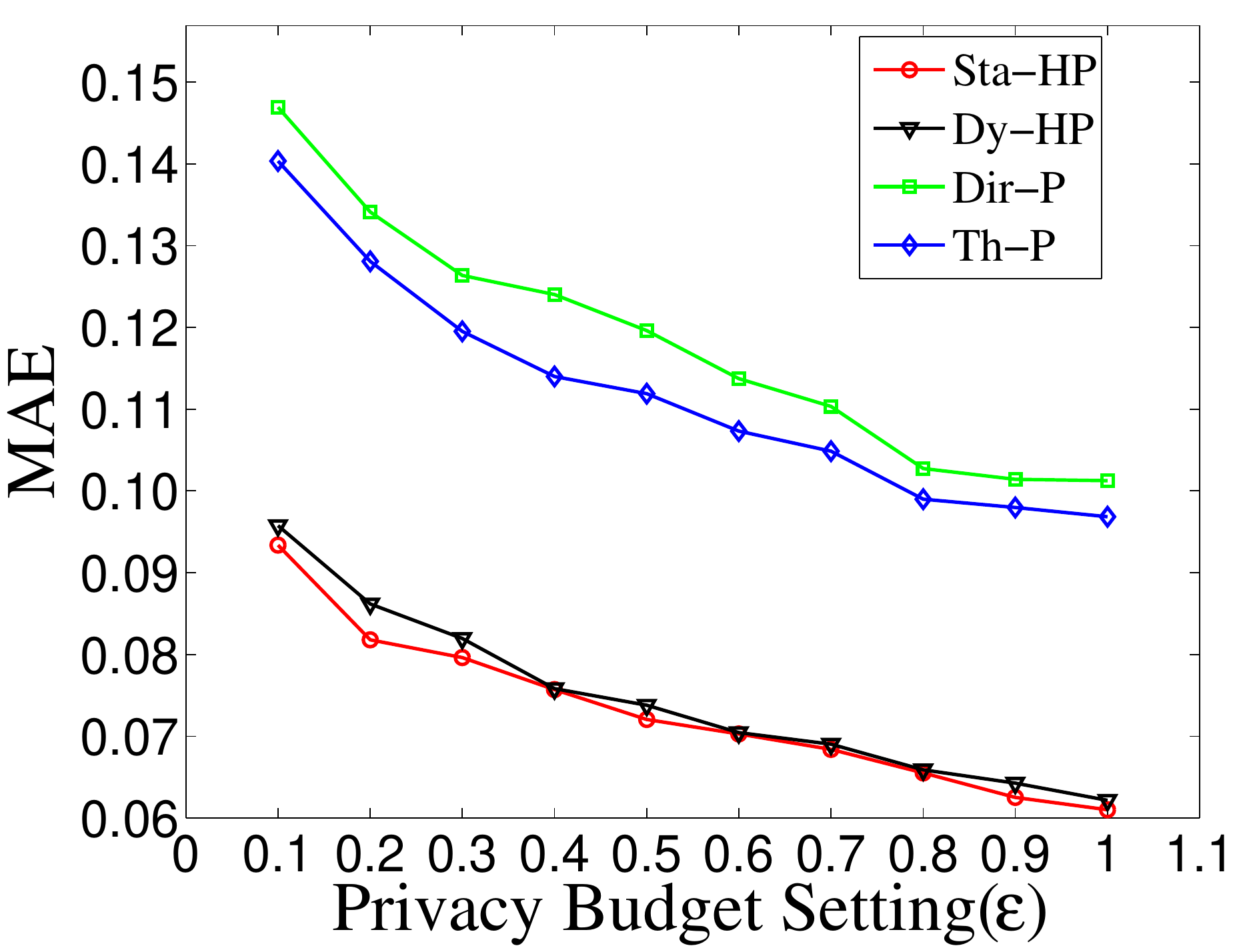}
  }
  \subfigure[MRE]{
    \includegraphics[width=1.5in]{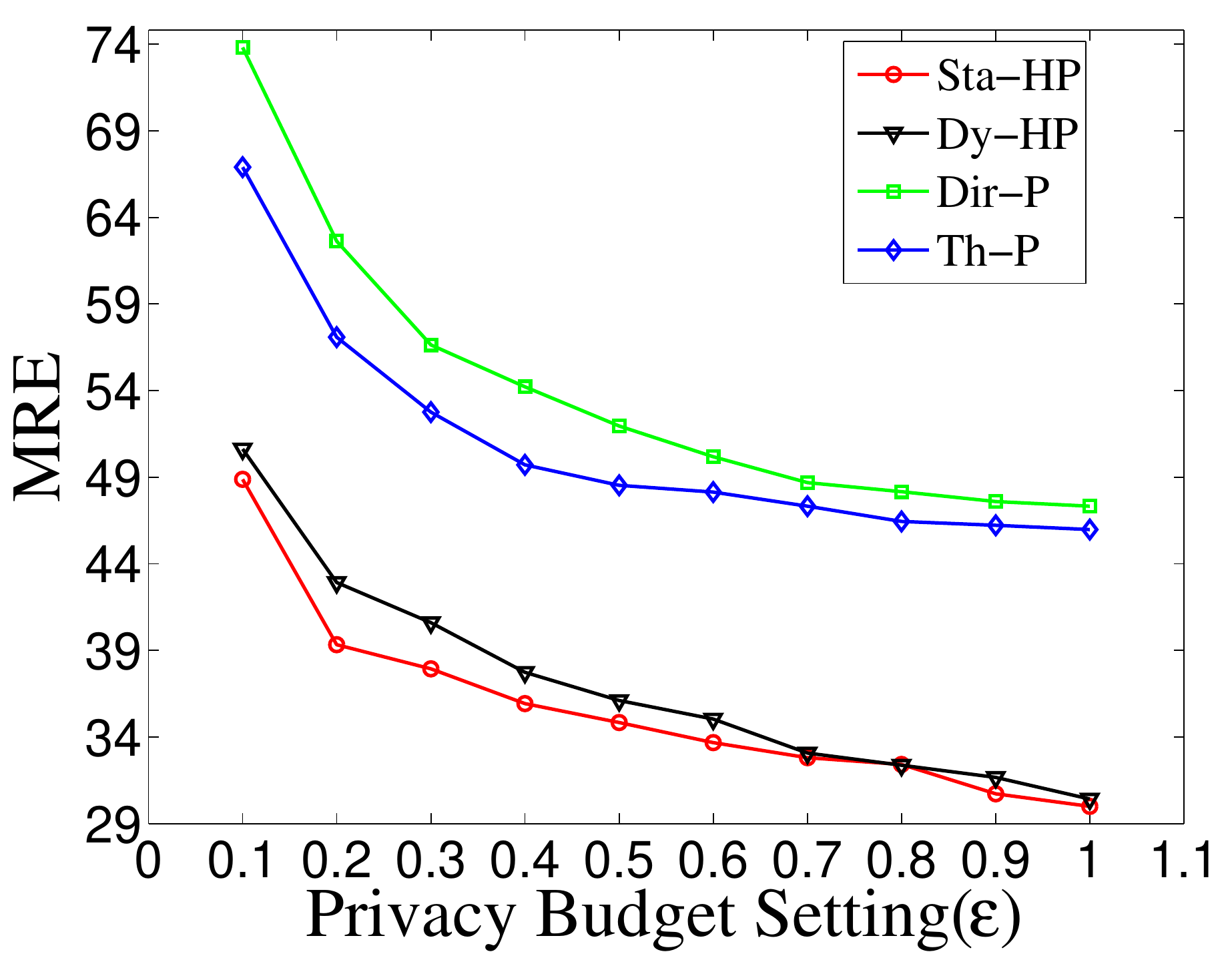}
  }
  \caption{Utility vs  $\epsilon$  on taxi dataset}\label{fig:taxi}
\end{figure}

\textbf{Effect of Post Process.} We conduct experiments of our schemes with and without post-process over two dataset to evaluate the effects of the consistency post-process mechanism. As shown in Fig.~\ref{fig:synthetic-post} and Fig.~\ref{fig:taxi-post}, we compare MAE and MRE values with and without post-process on the proposed schemes when the privacy budget $\epsilon$ changes from 0.1 to 1. On the synthetic dataset, we can find that  MAE and MRE values with and without post-process are gradually decreasing as the gradual increase of privacy budget $\epsilon$. However, the MAE and MRE values with post process are smaller than those without post-process. On the taxi dataset, this conclusion is similar. In addition, we can find that in the case when the privacy budget is small (eg., $\epsilon$ is 0.1), the post-process mechanism has a significant effect on reducing the error. And as the privacy budget becomes larger, the effect becomes less obvious. This is because when the privacy budget $\epsilon$ is large, the amount of noise introduced is small, and the error is also small.
\begin{figure}[t]
  \centering
  \subfigure[MAE]{
    \includegraphics[width=1.5in]{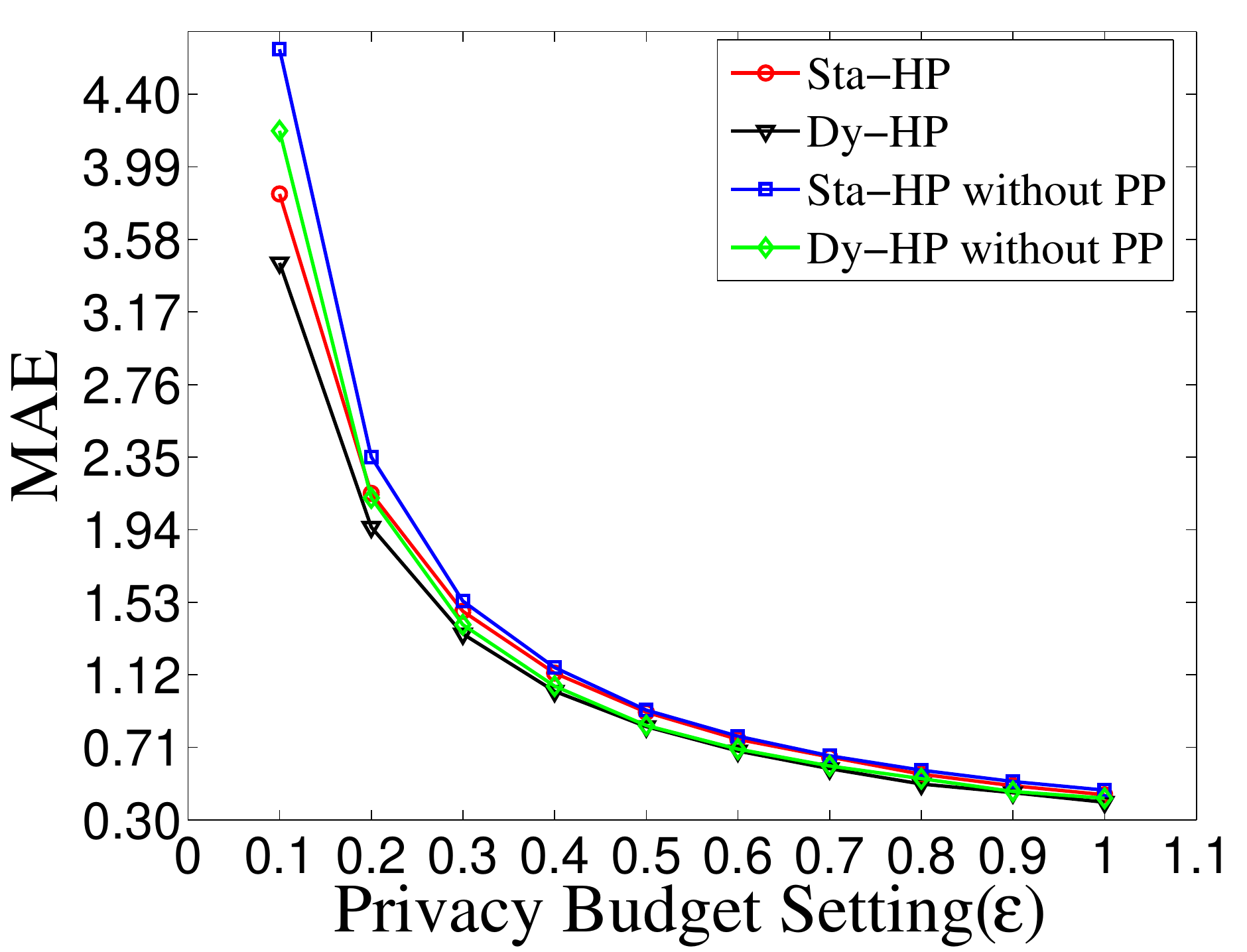}
  }
  \subfigure[MRE]{
    \includegraphics[width=1.5in]{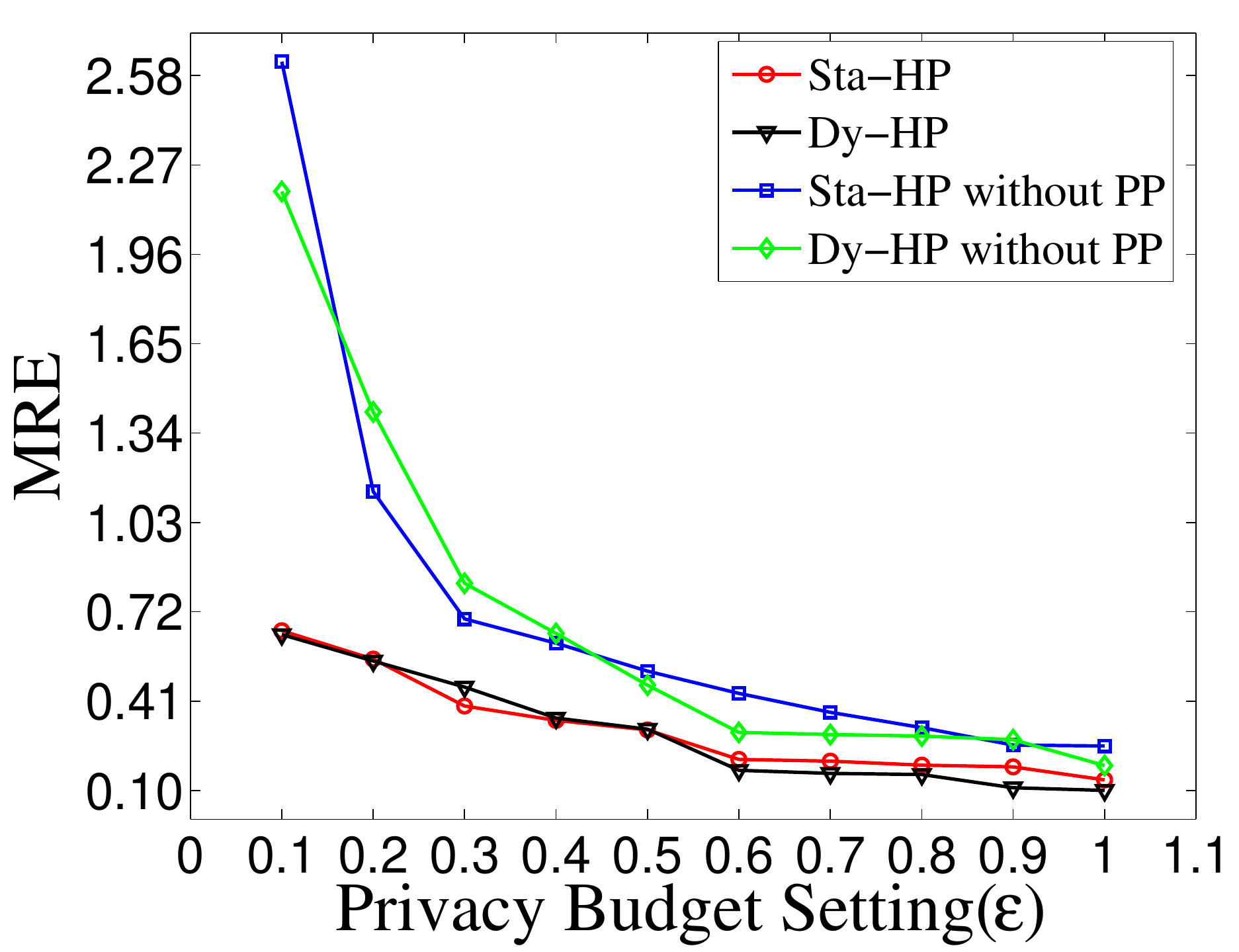}
  }
  \caption{Utility vs  $\epsilon$  on synthetic dataset}\label{fig:synthetic-post}
\end{figure}
\begin{figure}[t]
  \centering
  \subfigure[MAE]{
    \includegraphics[width=1.5in]{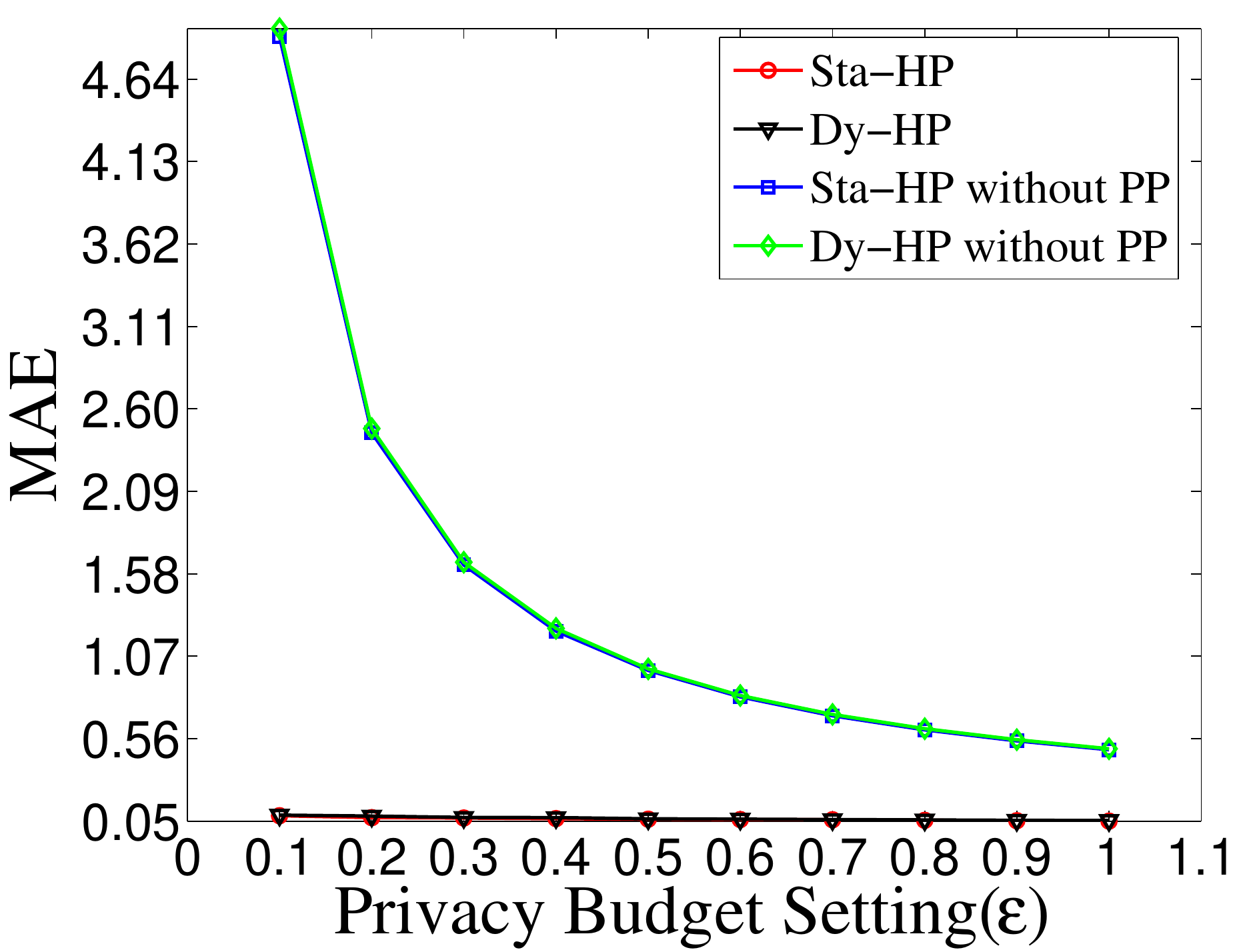}
  }
  \subfigure[MRE]{
    \includegraphics[width=1.5in]{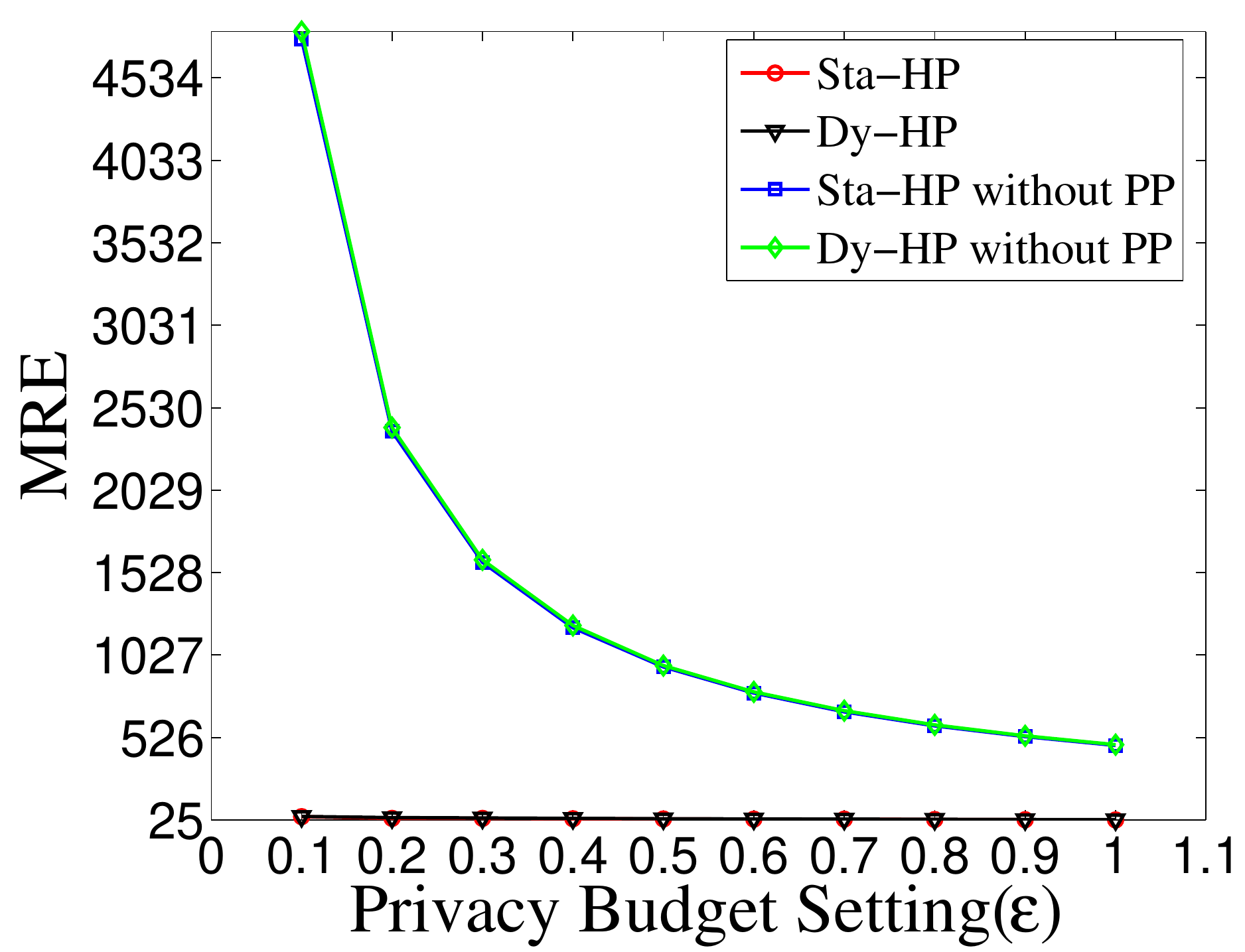}
  }
  \caption{Utility vs  $\epsilon$  on taxi dataset}\label{fig:taxi-post}
\end{figure}

\section{Conclusions}\label{sec:conclusions}
In this paper, we have propose several schemes for differentially private aggregated mobility data publication. We first propose two basic schemes and a post process mechanism to ensure the data validity and reduce the data utility, and discuss their drawbacks. Then, we examine the moving characteristics of mobile users, and propose two improved schemes for static and dynamic data publications, respectively, by combining the two basic schemes and leveraging the moving characteristics. Experimental results on real and synthetic datasets have shown that by taking advantage of the moving characteristics of aggregated mobility data, we can greatly improve the performance of the differentially private data publication. For the privacy reason, we have not obtained a dataset of real mobile users, and verified our schemes. We leave this as the future work.

\bibliographystyle{ACM-Reference-Format}
\bibliography{Refference}

\end{document}